\documentclass[a4paper,11pt]{article}

\usepackage[T1]{fontenc}
\usepackage[utf8]{inputenc}
\usepackage[english]{babel}
\usepackage[a4paper,left=2cm,right=2cm,top=2cm,bottom=2cm]{geometry}
\usepackage{amsmath,amssymb,amsthm,mathtools}
\usepackage{xspace}
\usepackage{xcolor}
\usepackage[hidelinks]{hyperref}
\usepackage[noabbrev,nameinlink]{cleveref}
\usepackage[ruled,vlined,linesnumbered]{algorithm2e}
\SetArgSty{textnormal}
\usepackage{tikz}
\usepackage{float}
\usetikzlibrary{decorations.pathreplacing,positioning,calc,arrows.meta}

\newtheorem{theorem}{Theorem}[section]
\newtheorem{lemma}[theorem]{Lemma}
\newtheorem{proposition}[theorem]{Proposition}
\newtheorem{corollary}[theorem]{Corollary}
\providecommand{\keywords}[1]{\par\noindent\textbf{Keywords: }#1\par}

\newcommand{\br}[1]{\mathopen{}\left( #1 \right)}
\newcommand{\brc}[1]{\mathopen{}\left\{ #1 \right\}}
\newcommand{\spr}[1]{\mathopen{}\left| #1 \right|}
\newcommand{\fl}[1]{\mathopen{}\left\lfloor #1 \right\rfloor}

\newcommand{\angl}[1]{\mathopen{}\langle #1 \rangle}

\newcommand{\E}[1]{\mathbb{E}\left[#1\right]}

\newcommand{\cov}{\operatorname{cov}}
\newcommand{\ct}{\operatorname{ct}}
\newcommand{\ft}{\operatorname{ft}}

\newcommand{\OPT}{\texttt{OPT}}
\newcommand{\COST}{\operatorname{cost}}

\newcommand{\argmax}{\mathopen{}\operatorname*{arg\,max}}

\newcommand{\cS}{\mathcal{S}}

\newcommand{\cC}{\mathcal{C}}
\newcommand{\cF}{\mathcal{F}}
\newcommand{\cU}{\mathcal{U}}

\newcommand{\cX}{\mathcal{X}}
\newcommand{\cP}{\mathcal{P}}

\newcommand{\ProblemPMSSC}{\textsc{PMSSC}}

\newcommand{\ProblemPDS}{\textsc{PDS}}

\title{Min-Sum Set Cover on Parallel Machines}
\author{Michał Szyfelbein}
\date{}

\begin{document}

\maketitle

\begin{abstract}
Consider the classical \textsc{Min-Sum Set Cover} problem: We are given a universe $\mathcal{U}$ of $n$ elements and a collection $\mathcal{S}$ of $k$ subsets of $\mathcal{U}$. Moreover, a cost function is associated with each set. The goal is to find a subsequence of sets from $\mathcal{S}$ which covers all elements in $\mathcal{U}$, such that the sum of the covering times of the elements is minimized. The covering time of an element $u$ is the cost of all sets that appear in the sequence before $u$ is first covered. This problem can be seen as a scheduling problem on a single machine, where each job represents a set and elements are represented by some kind of utility that is required to be provided by at least one of the jobs. The goal is to schedule the jobs in such a way to minimize the sum of provision times of the utilities. In this paper we consider a natural generalization of this problem to the case of $m$ machines, processing the jobs in parallel. We call this problem \textsc{Parallel Min-Sum Set Cover}.

To obtain approximation algorithms for various variants of this task, we exploit a crucial sub-problem called \textsc{Parallel Densest Subfamily}, where the goal is to find an asignment of sets to the machines that maximizes the ratio of the number of covered elements to the length of the assignment. We prove that an $\alpha$-approximation algorithm for this problem implies a $4\alpha$-approximation algorithm for \textsc{Parallel Min-Sum Set Cover}. Then, we show how to find such an assignment using the well known \textsc{Maximum Coverage Multiple Knapsack} problem.
In particular, this yields a $\frac{4e}{e-1}+\epsilon$-approximation for identical machines and an $\frac{8e}{e-1}+\epsilon$-approximation for unrelated machines.
If the sets are subject to precedence constraints we give a greedy algorithm for unit cost sets, with an $O(k^{2/3})$ approximation ratio and an $O(\log k)$-approximation algorithm for out-forest precedence constraints and identical machines. The latter algorithm uses a reduction to the \textsc{Group Steiner Orienteering} problem which is of independent interest. 
\end{abstract}

\keywords{Parallel Min Sum Set Cover, Approximation Algorithms, Scheduling}
\section{Introduction}
\label{sec:introduction}

Consider the following scenario: A massive infrastructure such as a network of bus stops is required to be built over the course of an extended period of time. The goal of the project is to provide all citizens with access to the public transportation. To do so, it is required to build the stops in such a way to enable each citizen to reach at least one of them within a reasonable time. This means, that each such possible stop location covers a certain subset of citizens. The contractor appointed to carry out this task has multiple teams of workers at their disposal, each of which can work on constructing a single stop at any given time. Moreover, the time required to build a stop at a certain location may differ both between places and teams. The goal of the contractor is to schedule the work of the teams in such a way that the average time required for a citizen to get access to the public transportation is minimized. This problem can be modelled using the following generalization of the classical \textsc{Min-Sum Set Cover} problem: We are given a universe $\cU$ of $n$ elements representing people and a collection $\cS$ of $k$ subsets of $\cU$ encoding possible stop locations. There are $m$ machines at our disposal, capable of processing the sets, representing worker teams. Moreover, each set-machine pair is associated with a cost function representing the time required to process the set on that machine. The goal is to find an $m$-machine schedule of sets from $\cS$ that covers all elements in $\cU$, such that the sum of the covering times of the elements is minimized. The covering time of an element $u$ is the first moment such that some set containing $u$ has already been fully processed. This problem can be seen as a scheduling problem, where each job represents a set and each element is represented by some kind of utility that is required to be provided by at least one of the jobs. The goal is to schedule the jobs in such a way as to minimize the sum of provision times of the utilities. We call this problem \textsc{Parallel Min-Sum Set Cover} (\textsc{PMSSC}). See Figure \ref{fig:pmssc-instance-and-schedule} for an example.
\begin{figure}[t]
\centering
\colorlet{sone}{blue!31}
\colorlet{stwo}{cyan!34}
\colorlet{sthree}{teal!36}
\colorlet{sfour}{violet!30}
\colorlet{sfive}{orange!42}
\colorlet{ssix}{red!34}
\colorlet{sseven}{green!38}
\colorlet{seight}{yellow!46}
\colorlet{snine}{magenta!30}
\colorlet{sten}{brown!38}

\begin{tikzpicture}[
    x=0.58cm,
    y=0.52cm,
    setnode/.style={draw, rounded corners=2pt, minimum width=1.1cm, minimum height=0.62cm, font=\scriptsize, inner sep=1.1pt},
    unode/.style={circle, draw, fill=green!8, minimum size=0.41cm, inner sep=0pt, font=\tiny}
]
    \node[setnode, fill=sone]   (s1)  at (1.6,8.0) {$S_1(1)$};
    \node[setnode, fill=stwo]   (s2)  at (3.8,8.0) {$S_2(2)$};
    \node[setnode, fill=sthree] (s3)  at (6.0,8.0) {$S_3(3)$};
    \node[setnode, fill=sfour]  (s4)  at (8.2,8.0) {$S_4(2)$};
    \node[setnode, fill=sfive]  (s5)  at (10.4,8.0) {$S_5(4)$};
    \node[setnode, fill=ssix]   (s6)  at (12.6,8.0) {$S_6(5)$};
    \node[setnode, fill=sseven] (s7)  at (14.8,8.0) {$S_7(1)$};
    \node[setnode, fill=seight] (s8)  at (17.0,8.0) {$S_8(3)$};
    \node[setnode, fill=snine]  (s9)  at (19.2,8.0) {$S_9(2)$};
    \node[setnode, fill=sten]   (s10) at (21.4,8.0) {$S_{10}(4)$};

    \foreach \i in {1,...,20} {
        \node[unode] (u\i) at (\i+1.25,2) {$u_{\i}$};
    }

    \foreach \u in {1,2}              {\draw[sone!85!black, line width=0.7pt]   (s1) -- (u\u);}       
    \foreach \u in {3,5,7}            {\draw[stwo!85!black, line width=0.7pt]   (s2) -- (u\u);}       
    \foreach \u in {6,8,10}           {\draw[sthree!85!black, line width=0.7pt] (s3) -- (u\u);}       
    \foreach \u in {4,9,11,13}        {\draw[sfour!85!black, line width=0.7pt]  (s4) -- (u\u);}       
    \foreach \u in {5,12,15,16}       {\draw[sfive!85!black, line width=0.7pt]  (s5) -- (u\u);}       
    \foreach \u in {8,10,14,18,20}    {\draw[ssix!85!black, line width=0.7pt]   (s6) -- (u\u);}       
    \foreach \u in {11,17}            {\draw[sseven!85!black, line width=0.7pt] (s7) -- (u\u);}       
    \foreach \u in {7,16,19}          {\draw[seight!85!black, line width=0.7pt] (s8) -- (u\u);}       
    \foreach \u in {14,20}            {\draw[snine!85!black, line width=0.7pt]  (s9) -- (u\u);}       
    \foreach \u in {9,12,18,19}       {\draw[sten!85!black, line width=0.7pt]   (s10) -- (u\u);}      

    \node[font=\small, anchor=west] at (0.5,9.1) {Example identical machine PMSSC instance: sets with costs at the top, universe at the bottom.};
\end{tikzpicture}

\vspace{0.35cm}

\begin{tikzpicture}[
    x=1.1cm,
    y=0.95cm,
    lane/.style={line width=0.45pt},
    task/.style={draw, rounded corners=1pt, minimum height=0.58cm, font=\scriptsize, align=center},
    mlabel/.style={font=\small, anchor=east}
]
    \draw[->, line width=0.5pt] (0,0.6) -- (10.2,0.6) node[anchor=west, font=\small] {time};
    \foreach \t in {0,...,10} {
        \draw[line width=0.25pt] (\t,0.53) -- (\t,0.67);
        \node[font=\tiny, anchor=north] at (\t,0.5) {\t};
    }

    \node[mlabel] at (-0.2,-0.2) {$M_1$};
    \node[mlabel] at (-0.2,-1.3) {$M_2$};
    \node[mlabel] at (-0.2,-2.4) {$M_3$};
    \draw[lane, gray!55] (0,-0.2) -- (10,-0.2);
    \draw[lane, gray!55] (0,-1.3) -- (10,-1.3);
    \draw[lane, gray!55] (0,-2.4) -- (10,-2.4);

    \draw[task, fill=sone]   (0,-0.5) rectangle (1,0.1) node[pos=.5] {$S_1$};
    \draw[task, fill=stwo]   (1,-0.5) rectangle (3,0.1) node[pos=.5] {$S_2$};
    \draw[task, fill=sfive]  (3,-0.5) rectangle (7,0.1) node[pos=.5] {$S_5$};

    \draw[task, fill=sfour]  (0,-1.6) rectangle (2,-1.0) node[pos=.5] {$S_4$};
    \draw[task, fill=ssix]   (2,-1.6) rectangle (7,-1.0) node[pos=.5] {$S_6$};
    \draw[task, fill=sseven] (7,-1.6) rectangle (8,-1.0) node[pos=.5] {$S_7$};

    \draw[task, fill=sthree] (0,-2.7) rectangle (3,-2.1) node[pos=.5] {$S_3$};
    \draw[task, fill=seight] (3,-2.7) rectangle (6,-2.1) node[pos=.5] {$S_8$};

    \node[font=\small, anchor=west] at (0,-3.35) {Example schedule on 3 machines of cost $83$, note that $S_9$ and $S_{10}$ are not scheduled.};
\end{tikzpicture}
\caption{An example instance and a feasible schedule for \textsc{Parallel Min-Sum Set Cover}.}
\label{fig:pmssc-instance-and-schedule}
\end{figure}

\subsection{Our results and techniques}
\label{sec:our-results}

In this work we present a series of approximation algorithms for the \textsc{Parallel Min-Sum Set Cover} (\textsc{PMSSC}) problem. The most important part of our analysis is the choice of an appropriate subproblem, which we call the \textsc{Parallel Densest Subfamily}. In Section \ref{sec:pmssc-scheme}, we show how to use an $\alpha$-approximation algorithm for the \textsc{Parallel Densest Subfamily} to obtain a $4\alpha$-approximation algorithm for the \textsc{Min-Sum Set Cover} problem on parallel machines. Our algorithm behaves greedily with respect to the found dense subfamily and its analysis is a variation on a particularly elegant analysis of the greedy algorithm for the \textsc{Min-Sum Set Cover} problem on a single machine introduced by Feige et.~al.~\cite{ApproximatingMinSumSetCover} and utilized in few subsequent works \cite{Precedence-ConstrainedMinSumSetCover,AGeneralFrameworkForApproximatingMinSumOrderingProblems}.

Then, in Section \ref{sec:pds}, we show how to solve \textsc{Parallel Densest Subfamily} on identical and unrelated machines. To handle the former variant, we reduce it to the well-known \textsc{Maximum Coverage Multiple Knapsack} problem and obtain an $\frac{e}{e-1}+\epsilon$-approximation FPTAS for \textsc{PDS}. To solve the latter version, we use a similar reduction to a variation of the \textsc{MCMK} which results in a $\frac{2e}{e-1}+\epsilon$-approximation FPTAS for \textsc{PDS}. As a consequence, we obtain a $\frac{4e}{e-1}+\epsilon< 6.33$-approximation FPTAS for \textsc{PMSSC} on identical machines and an $\frac{8e}{e-1}+\epsilon< 12.66$-approximation FPTAS for \textsc{PMSSC} on unrelated machines

In Section \ref{sec:precedence}, we shift towards precedence-constrained instances, which means that some sets might be requied to be processed before others. We show a greedy approach enabling obtaining an $O\br{k^{2/3}}$-approximation algorithm for the problem called \textsc{Precedence-Constrained Parallel Densest Subfamily} (\textsc{PCDST}) with unit cost sets subject to precedence constraints, which is an analog of \textsc{PDS} in the presence of precedence constraints.
We also consider out-forest precedence constraints, where each set has at most one direct predecessor. Using a reduction to \textsc{Group Steiner Orienteering}, we obtain an $O\br{\log k}$-approximation for \textsc{PCDST} with out-forest constrains and identical machines. Consequently we obtain $O\br{k^{2/3}}$-approximation algorithm for \textsc{PMSSC} with unit-cost jobs subject to general precedence constraints, and an $O\br{\log k}$-approximation algorithm for out-forest precedence constraints (identical machines). Note that even for $m=1$, the presence of precedence constraints makes \textsc{PMSSC} inapproximable within a factor of $O\br{k^{1/12-\epsilon}}$ for any $\epsilon>0$ unless the \textsc{Planted Dense Subgraph} conjecture fails \cite{Precedence-ConstrainedMinSumSetCover}. These are the first results for \textsc{PMSSC} to the best of our knowledge.
\subsection{Related Work}

The \textsc{Set Cover} is a cornerstone problem in the combinatorial optimization. Vazirani uses it as a prime example of use of various techniques in approximation algorithms in his book \cite{vazirani2001approximation}. The greedy algorithm for this problem achieves an $H_n$-approximation guarantee, where $H_n$ is the $n$-th harmonic number, and this is tight under the assumption that $P\neq NP$~\cite{AnalyticalApproachToParallelRepetition}. If each element belongs to at most $f$ sets, then there exists a layering algorithm which achieves an $f$-approximation. The related problem called \textsc{Maximum Coverage} (\textsc{MC}) is to cover maximum number of elements using a given budget. The greedy algorithm for this problem achieves an $\frac{e}{e-1}$-approximation guarantee, and this is best possible as well (under $P\neq NP$ ) \cite{TheBudgetedMaximumCoverageProblem}. Among important generalizations of \textsc{MC} belongs the \textsc{Submodular Optimization}, which is to maximize a submodular function under certain constraints. Multiple variants of this setup have been studied starting with the work of Nemhauser et al. \cite{AnAnalysisofApproximationsForMaximizingSubmodularSetFunctionsI}.

The \textsc{Min-Sum Set Cover} problem was introduced by Feige et.~al.~\cite{ApproximatingMinSumSetCover} who showed that the greedy algorithm achieves $4$-approximation and it is NP-hard achieve better ratio. Since then, their work has been extended towards various directions. Azar et.~al.~\cite{AGeneralFrameworkForApproximatingMinSumOrderingProblems} present a general framework for approximating min-sum ordering problems and they show that the greedy algorithm achieves $4\alpha$-approximation guarantee, where $\alpha$ is the approximation ratio for the problem of finding a densest subset. McClintock et.~al.~\cite{Precedence-ConstrainedMinSumSetCover} consider a generalization of \textsc{Min-Sum Set Cover} where sets are precedence-constrained, and they show that a modified  greedy which uses the densest precedence-closed subfamily achieves a $4\alpha$-approximation guarantee, given an $\alpha=\sqrt{k}$-approximation algorithm for the latter problem. Moreover, the problem cannot be approximated within the factor of $O\br{k^{1/12-\epsilon}}$ nor $O\br{n^{1/6-\epsilon}}$ for any $\epsilon>0$ unless the \textsc{Planted Dense Subgraph} conjecture is false. These results have been further extended by Szyfelbein et.~al.~\cite{Precedence-ConstrainedDecisionTreesAndCoverings} to an $O^*\br{\sqrt{k}}$-approximation for both \textsc{Precedence-Constrained Set Cover} and \textsc{Precedence-Constrained MSSC} for the case where only a certain fraction of elements is required to be covered, which requires non-greedy algorithms.

\textsc{Parallel Task Scheduling} is a well-studied problem in the scheduling literature. The problem $P||C_{max}$ is strongly $NP$-hard and has an EPTAS running in $2^{O\br{1/\epsilon\cdot \log\br{1/\epsilon}\cdot\log\log\br{1/\epsilon}}}$ time \cite{LoadBalancing:TheLongRoadFromTheoryToPractice}. The $Q||C_{max}$ also admits an EPTAS, but running in ${2^{O\br{1/\epsilon\cdot\log^4\br{1/\epsilon}}}}$ time \cite{ClosingTheGapForMakespanSchedulingViaSparsificationTechniques}. However, $R||C_{max}$ cannot be approximated within a factor of $3/2-\epsilon$ for any $\epsilon>0$, unless $P=NP$ and the best known approximation algorithm achieves a factor of $2$ \cite{ApproximationAlgorithmsForSchedulingUnrelatedParallelMachines}. 
The problem $R||\sum C_j$ is polynomially solvable~\cite{SchedulingIndependentTasksToReduceMeanFinishingTime}, however the addition of weights ($R||\sum w_jC_j$) makes it strongly $NP$-hard \cite{ComplexityOfMachineSchedulingProblems}. For $Q||\sum w_jC_j$ a PTAS is known \cite{APTASForMinimizingWeightedCompletionTimeOnUniformlyRelatedMachines}, but the $R||\sum w_jC_j$ is APX-hard \cite{NonApproximabilityResultsForSchedulingProblemsWithMinsumCriteria} and the curently best known approximation is $1.488$~\cite{WeightedCompletionTimeMinimizationForUnrelatedMachinesViaIterativeFairContentionResolution}.
\subsection{Motivation and Applicatons}
The list of applications of \textsc{Parallel Min-Sum Set Cover} is beyond the scope of this work, hereby we showcase some of the more interesting examples:
\begin{enumerate}
    \item \textbf{Software testing:} We have a system containing a set of features which we want to provide. We wish to verify whether they work properly, and if not, identify the faulty components. We have a set of tests at our disposal, each of which can verify some subset of features and a number of computers which can execute these tests. Moreover, we have a cost function associated with each test and each computer which denotes the amount of time required to execute that test on this computer. We want to schedule the tests on the computers in such a way that the average time required to verify a feature is minimized.
    \item \textbf{Blackout fixing:} We have a power grid (or any other type of large infrastructure) which is providing electricity to citizens. A massive failure occurs and we want to identify the source of the failure as soon as possible. We have a set of teams of workers at our disposal, each of which at any time can visit a chosen location and check whether some of its components is faulty. We wish to minimize the average time required to identify the source of the failure.
    \item \textbf{Iterative feature release:} We have a software product which we want to provide to customers. We have a set of features which we want to implement, and a set of teams of workers at our disposal, each of which can be assigned to implement a certain component consisting of a subset of features at a given time. We want to schedule the work of the teams in such a way that the average time of feature implementation is minimized.
\end{enumerate}

From the theoretical point of view, the \textsc{Parallel Min-Sum Set Cover} is a natural generalization of the classical \textsc{Min-Sum Set Cover}. Multiple combinatorial problems can be cast as scheduling problems which results in various, previously unexplored variants. In our work we show, that even in the multiple-machine setting we can obtain non-trivial approximation algorithms for \textsc{PMSSC}. Moreover, although our main procedure is greedy, most of the subroutines behave in a non-greedy manner. Of particular interest is the case of out-forest precedence-constrained, identical machine variant, where we reduce the problem to the \textsc{Group Steiner Orienteering}, a problem regarding graph spanning.

\section{Preliminaries}
\label{sec:preliminaries}
The input of the \textsc{Parallel Min-Sum Set Cover} ($\ProblemPMSSC$) is as follows: We are given a universe $\cU$ of $n$ elements, a collection $\cS$ of $k$ subsets of $\cU$, $m$ machines and a cost function $c\colon \cS\times [m]\to \mathbb{N}$, where $[m]=\{1, 2, \dots, m\}$. We are also given a weight function $w\colon \cU\to \mathbb{N}$\footnote{Note that the addition of the weight function does not impact our analysis drastically but we include it for generality.}. The goal is to find a schedule $\sigma$ of sets from $\cS$ on $m$ machines such that the cost of the schedule is minimized. A \emph{schedule} $\sigma$ is a collection of $m$ sequences of sets from $\cS$. 
Let $\cC\subseteq \cS$ and let $\cU'\subseteq \cU$. The \emph{coverage} of $\cC$ in $\cU'$ is defined as $\cov\br{\cC, \cU'}=\bigcup_{S\in \cC} S\cap \cU'$. If $\cU'=\cU$, we simply write $\cov\br{\cC}$. For a sequence of sets $\pi=\angl{S_1, S_2, \dots, S_\ell}$ the coverage is defined by $\cov\br{\pi, \cU'}=\bigcup_{j=1}^{\ell} S_j \cap \cU'$. Analogously, let $\sigma=\brc{\sigma\br{j}}_{j=1}^m$ be a collection of $m$ sequences of sets. Then, the coverage of $\sigma$ is defined as 
\[
\cov\br{\sigma, \cU'}=\bigcup_{j=1}^{m} \cov\br{\sigma\br{j}, \cU'}.
\] 
A schedule $\sigma$ is a \emph{feasible solution} for $\ProblemPMSSC$ if $\cov\br{\sigma}=\cU$. Let $S\in \cS$ be some set scheduled in $\sigma$ on machine $j_S\in [m]$. Denote by $\sigma[\dots S]$ the prefix of the schedule $\sigma$ on machine $j_S$ that ends with $S$. The \emph{finish time} of $S$ is defined as 
\[
\ft\br{S}=\sum_{S'\in \sigma[\dots S]} c\br{S', j_S}.
\]
For an element $u\in \cU$, let $S_u\ni u$ be the set $S$ in $\sigma$ minimizing $\ft\br{S}$. We define \emph{covering time} of $u$ as $\ct\br{u} = \ft\br{S_u}$. The \emph{cost} of a schedule 
\[
\COST\br{\sigma}=\sum_{u\in \cU} w\br{u}\cdot \ct\br{u}
\] 
is the weighted sum of covering times of all of the elements in $\cU$. The goal of $\ProblemPMSSC$ is then to find a feasible schedule $\sigma$ minimizing $\COST\br{\sigma}$.
We distinguish three cases of $\ProblemPMSSC$ according to the cost function $c$:
\begin{enumerate}
    \item \textbf{Unit cost sets:} For every set $S\in\cS$, and $j\in[m]$, $c\br{S, j}=1$.
    \item \textbf{Identical machines:} For every set $S\in\cS$, and $j\in[m]$, $c\br{S, j}=c\br{S}$.
    \item \textbf{Unrelated machines:} No additional conditions are imposed on the cost function.
\end{enumerate}

 Given an assignment of sets to machines $\cF=\brc{\cF\br{j}}_{j=1}^m$, denote by 
\[
\spr{\cF}=\max_{j\in [m]}\brc{\sum_{S\in \cF\br{j}} c\br{S, j}}\]
the cost of the most expensive machine in $\cF$.
For every $\cU'\subseteq \cU$, define its \emph{weight} by $w\br{\cU'}=\sum_{u\in \cU'} w\br{u}$. The density of an assignment is now defined as
\[
\Delta\br{\cF, \cU} = \frac{w\br{\cov\br{\cF, \cU}}}{\spr{\cF}}.
\]
In order to solve \textsc{PMSSC} efficiently we define the following key subproblem called \textsc{Parallel Densest Subfamily} (\textsc{PDS}): Given a universe $\cU$, a family of covering subsets~$\cS$, $m$ machines, a cost function $c\times [m]\colon \cS\to \mathbb{N}$ and weights $w\colon \cU\to \mathbb{N}$, find an assignment of subsets to machines $\cF$ maximizing $\Delta\br{\cF, \cU}$. Note, that for any maximization problem, we adopt a convention that an $\alpha$-approximation algorithm (with $\alpha\geq 1$) is an algorithm that returns a solution with value at least $\OPT/\alpha$, where $\OPT$ is the optimal value of the objective function.

Furthermore, the following \textsc{Maximum Coverage Multiple Knapsack} (\textsc{MCMK}) problem will also be required: Given a universe $\cU$, a family of covering subsets~$\cS$, $m$ knapsacks, a cost function $c\times [m]\colon \cS\to \mathbb{N}$, a budget $B_j$ for every $j\in[m]$ and weights $w\colon \cU\to \mathbb{N}$, find an assignment of subsets to knapsacks $\cF$ such that for every knapsack $j\in [m]$ we have that $\sum_{S\in\cF\br{j}} c\br{S, j} \leq B_j$ and $w\br{\cov\br{\cF, \cU}}$ is maximized. We remark that here knapsacks serve the same purpose as the machines and the different naming is a matter of convention.

We will also use a sub-problem known as \textsc{Group Steiner Orienteering}(\textsc{GSO}). Consider a symmetric metric $\br{V, d}$ with a root $r\in V$. An $r$-tour is a sequence of vertices $\br{r=u_0,u_1,\dots, u_k=r}$ starting and ending with $r$. The length of such a tour $\spr{\sigma}=\sum_{i=1}^kd\br{u_{i-1}, {u_i}}$ is the total length of all edges of the tour. The input of \textsc{GSO} consists of a symmetric metric $\br{V, d}$, root $r\in V$, $g$ groups $\cX=\brc{X_i\subseteq V}_{p=1}^g$, a weight function $w\colon\cX\to \mathbb{N}$ on the groups, and a budget $K$. The goal is to find an $r$-tour $\sigma$ with $\spr{\sigma}\leq K$ that maximizes the total weight of the covered groups. A group $p\in [g]$ is covered if $\sigma\cap X_p\neq \emptyset$. An algorithm is a $\br{\alpha, \beta}$-approximation for \textsc{GSO} if it returns an $r$-tour of length $\alpha\cdot K$ and covers at least $1/\beta$ fraction of the weight covered by the optimal solution.

Let $C=\sum_{j=1}^{m}\sum_{S\in \cS}c\br{S, j}$ be a trivial upper bound on the length of any schedule.



\section{General scheme of approximating $\ProblemPMSSC$}
\label{sec:pmssc-scheme}

In this section we present a $4\alpha$-approximation greedy algorithm for $\ProblemPMSSC$ using an $\alpha$-approximation algorithm for \textsc{Parallel Densest Subfamily}. Assume we have an $\alpha$-approximation algorithm for \textsc{Parallel Densest Subfamily}. The algorithm is as follows:

\begin{algorithm}[H]
\caption{General scheme of approximating Parallel Min-Sum Set Cover}
\label{alg:pmssc_scheme}

\LinesNumbered

Let $\pi=\brc{\pi\br{j}}_{j=1}^m$ be an empty schedule

Let $R\gets \cU$ and $S\gets \cS$

\While{$R\neq \emptyset$}{

Run the $\alpha$-approximation algorithm for $\ProblemPDS$ with input $\br{R,S}$ to obtain an assignment $\cF$

\For{$j\in [m]$}{

Append the sets in $\cF\br{j}$ to the end of $\pi\br{j}$

$S\gets S\setminus \cF\br{j}$

}

$R\gets R\setminus \cov\br{\cF, R}$

}

\Return $\pi$

\end{algorithm}

\begin{theorem}\label{theorem:PMSSC_approx}
    The above algorithm is a $4\alpha$-approximation algorithm for $\ProblemPMSSC$.
\end{theorem}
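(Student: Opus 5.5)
We follow the histogram argument of Feige et al.~\cite{ApproximatingMinSumSetCover}, adapted to parallel machines. The idea is to compare two histograms, one for the greedy schedule and one for the optimum, and show that a suitable shrinking of the greedy histogram fits inside the optimal one.

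\textbf{Setup.} Let the while loop run for phases $i=1,\dots,T$. In phase $i$ let $R_i$ be the uncovered set at the start of the phase, and $\cF_i$ the assignment returned by the $\PDS$ oracle. Write $X_i=\cov\br{\cF_i,R_i}$ and $L_i=\spr{\cF_i}$. Since sets are appended machine by machine, every machine has finished its phase-$i$ sets by time at most $t_i=\sum_{\ell\le i}L_\ell$. Hence every $u\in X_i$ has $\ct\br{u}\le t_i$, and
\[
\COST\br{\pi}\le\sum_{i} w\br{X_i}\,t_i=\sum_i w\br{R_i}\,L_i .
\]
The equality is a summation by parts, using $w\br{R_i}=\sum_{\ell\ge i}w\br{X_\ell}$. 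We assign to each element $u\in X_i$ the \emph{price} $p_u=L_i w\br{R_i}/w\br{X_i}$, which is $w\br{R_i}/\Delta\br{\cF_i,R_i}$. Then $\COST\br{\pi}\le\sum_u w\br{u}p_u$.

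\textbf{Key density bound.} Let $\sigma^*$ be optimal, and for a time $t$ let $\sigma^*_{\le t}$ be the assignment consisting of the sets that $\sigma^*$ completes by time $t$. It has $\spr{\sigma^*_{\le t}}\le t$ and is a feasible $\PDS$ solution on the residual instance $\br{R_i,S}$. Sets already used by the algorithm cover nothing in $R_i$, so removing them does not hurt. Let $N_i(t)$ be the weight of elements of $R_i$ not yet covered by $\sigma^*$ at time $t$. Then $\sigma^*_{\le t}$ covers weight $w\br{R_i}-N_i(t)$ of $R_i$ within length $t$. The $\alpha$-approximation guarantee therefore gives
\[
\Delta\br{\cF_i,R_i}\ge\frac{w\br{R_i}-N_i(t)}{\alpha t}\quad\text{for every }t>0 .
\]

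\textbf{Histogram comparison.} Represent $\OPT$ as the area under the curve $t\mapsto N(t)$, the uncovered weight at time $t$ in $\sigma^*$. Represent the greedy bound as a histogram with elements on the horizontal axis ordered by phase, each element $u$ of width $w\br{u}$, and height $p_u$. Shrink this histogram by $1/2$ both horizontally and vertically, giving area $\sum w(u)p_u/4$. We then claim it lies under the optimal curve inflated by $\alpha$. Take the point of phase $i$ at horizontal position $w\br{R_i}/2$ (measured from the right end) and height $p/2$, where $p=w\br{R_i}/\Delta_i$. Choose $t=p/(2\alpha)$. The density bound yields $w\br{R_i}-N_i(t)\le \alpha t\,\Delta_i=w\br{R_i}/2$, so $N(t)\ge N_i(t)\ge w\br{R_i}/2$. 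Thus the point $\br{w\br{R_i}/2,\,\alpha t}$ lies under the $\alpha$-scaled optimal curve. Integrating this over all points gives $\sum_u w\br{u}p_u/4\le\alpha\,\OPT$, and hence $\COST\br{\pi}\le4\alpha\,\OPT$.

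The main obstacle is the first step: justifying the phase-wise charge $\ct\br{u}\le t_i$ on parallel machines, and verifying that the truncated optimal schedule is a valid $\PDS$ competitor of length at most $t$. Both hold because the makespan $\spr{\cdot}$ is a maximum over machines and prefixes of $\sigma^*$ remain assignments. Termination follows because each phase covers at least one element, provided the oracle returns positive density; positive density is guaranteed since some single set covers part of $R_i$.
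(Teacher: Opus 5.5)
Your proposal is correct and is essentially the paper's own argument. It uses the same upper bound $\sum_i w(R_i)\spr{\cF_i}$ and the same key step: the optimal schedule truncated at time $\spr{\cF_i}\, w(R_i)/(2\alpha\, w(X_i))$, with already-used sets removed, is a \textsc{PDS} competitor and hence covers at most $w(R_i)/2$ of $R_i$; it then finishes with the same histogram-shrinkage comparison, relying on monotonicity of the optimal curve. Your real threshold $t$ in place of $\fl{h_i}$, and halving both axes before comparing with the $\alpha$-scaled optimum, are cosmetic reparametrizations, and the former is slightly cleaner because it avoids the integrality step $\fl{h_i}+1\geq h_i$.
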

\begin{proof}
     
    Let $\sigma$ be some optimal schedule. We will show that the schedule $\pi$ returned by the above algorithm is a $4\alpha$-approximate solution. We provide an adapted version of the histogram schrinkage argument used in various works on \textsc{Min-Sum Set Cover} \cite{ApproximatingMinSumSetCover,AGeneralFrameworkForApproximatingMinSumOrderingProblems,Precedence-ConstrainedMinSumSetCover}.

    Let $\cF_i$ be the assignment obtained in the $i$-th iteration of the while loop, $R_i$ be the set of uncovered elements at the beginning of the $i$-th iteration and $X_i$ be the set of elements covered for the first time in the $i$-th iteration. We have the following upper bound on the cost of the schedule $\pi$: $\sum_{i}w\br{R_i}\cdot\spr{\cF_i}$, since in the worst case $\cF_i$ contributes the cost of $\spr{\cF_i}$ to covering time of all elements in $R_i$.

    Let $\ell_i=\sum_{p=1}^{i}\spr{\cF_i}$. Order the elements of $\cU$ by the time they are covered by $\pi$. For each $u$ we plot a rectangle of width $w\br{u}$ and height $\ell_i$, where $i$ is the index of the first $\cF_i$ which covered $u$. Clearly, the area under this histogram is equal to $\sum_{i}w\br{R_i}\cdot\spr{\cF_i}$, which is an upper bound on the cost of $\pi$. Moreover, by summing up over the values of the function, not over the elements (Lebesgue integral style), the area under the histogram can be partitioned into rectangles of width $w\br{R_i}$ and height $\spr{\cF_i}$, which is equal to $\sum_{i}w\br{R_i}\cdot\spr{\cF_i}$. For a visual depiction see Figure~\ref{fig:histogram-shrinkage}. 
    

\begin{figure}[!t]
\centering

\begin{minipage}[t]{0.48\linewidth}
\centering
\resizebox{\linewidth}{!}{%
\begin{tikzpicture}[xscale=1.2,yscale=0.8]
    \draw[thick] (0,0) -- (10,0) -- (10,8) -- (0,8) -- (0,0);
    
    \node[below=0.3cm] at (5, 0.2) {Weight};
    \node[left=0.3cm] at (0.2-0.3, 4) {Time};
    
    \node[above=0.5cm] at (5, 7.5) {\textbf{Greedy}};
    
    
    \filldraw[thick,fill=lightgray!60] (0, 0) rectangle (10, 2.3);
    
    \filldraw[thick,fill=lightgray!60] (1, 2.3) rectangle (10, 3.1);
    
    \filldraw[thick,fill=blue!40] (2, 3.1) rectangle (10, 4.7);

    \draw[very thick,decorate,decoration={brace,amplitude=7pt}]
        (2,3.1) -- (2,4.7) node[midway,left=7pt] {$\spr{\cF_j}$};

    \draw[very thick,decorate,decoration={brace,amplitude=7pt}]
        (2,4.7) -- (4,4.7) node[midway,above=6pt] {$\spr{X_j}$};

    \draw[very thick,decorate,decoration={brace,amplitude=7pt}]
        (2,3.1) -- (10,3.1) node[midway,above=6pt] {$\spr{R_j}$};
    
    \filldraw[thick,fill=lightgray!60] (4, 4.7) rectangle (10, 5.4);
    
    \filldraw[thick,fill=lightgray!60] (5, 5.4) rectangle (10, 7.2);
    
    \filldraw[thick,fill=lightgray!60] (7, 7.2) rectangle (10, 8);
\end{tikzpicture}
}
\end{minipage}\hfill%
\begin{minipage}[t]{0.48\linewidth}
\centering
\resizebox{\linewidth}{!}{%
\begin{tikzpicture}[xscale=1.2,yscale=0.8]
    \draw[thick] (0,0) -- (10,0) -- (10,8) -- (0,8) -- (0,0);
    
    \node[below=0.3cm] at (5, 0.2) {Weight};
    \node[left=0.3cm] at (0.2-0.3, 4) {Time};
    
    \node[above=0.5cm] at (5, 7.5) {\textbf{OPT}};
    
    \draw[thick]
        (0,0) -- (0,1.8) -- (2.0,1.8) -- (2.0,3.0) --
        (4.0,3.0) -- (4.0,4.1) -- (6.0,4.1) -- (6.0,5.0) --
        (7.2,5.0) -- (7.2,6.2) -- (8.4,6.2) -- (8.4,7.0) -- (10,7.0);

    \fill[lightgray!60] (5.0,0) rectangle (5.5,3.9);
    \draw[thick] (5.0,0) rectangle (5.5,3.9);

    \fill[lightgray!60] (5.5,0) rectangle (6.0,1.65);
    \draw[thick] (5.5,0) rectangle (6.0,1.65);

    \fill[blue!40] (6.0,0) rectangle (7.0,2.7);
    \draw[thick] (6.0,0) rectangle (7.0,2.7);

    \fill[lightgray!60] (7.0,0) rectangle (7.5,1.2);
    \draw[thick] (7.0,0) rectangle (7.5,1.2);

    \fill[lightgray!60] (7.5,0) rectangle (8.5,3.1);
    \draw[thick] (7.5,0) rectangle (8.5,3.1);

    \fill[lightgray!60] (8.5,0) rectangle (10.0,1.35);
    \draw[thick] (8.5,0) rectangle (10.0,1.35);

    \draw[very thick,decorate,decoration={brace,amplitude=7pt}]
        (6.0,2.7) -- (7.0,2.7) node[midway,above=6pt] {$w\br{X_j}/2$};
    \draw[very thick,decorate,decoration={brace,amplitude=7pt}]
        (6.0,0) -- (6.0,2.7) node[midway,left=9pt] {$\frac{\spr{\cF_j}\cdot w\br{R_j}}{2/\alpha\cdot w\br{X_j}}$};

    \draw[very thick,decorate,decoration={brace,amplitude=7pt,mirror}]
        (10.0,0) -- (6.0,0) node[midway,above=7pt] {$w\br{R_j}/2$};
\end{tikzpicture}
}
\end{minipage}

\caption{Histogram shrinkage argument: (left) Histogram of the greedy algorithm, (right) Histogram of OPT and mapped columns.}
\label{fig:histogram-shrinkage}

\end{figure}

    Now, take any such strip and define $h_i=\frac{\spr{\cF_i}\cdot w\br{R_i}}{2\alpha\cdot w\br{X_i}}$. We map this strip onto a column of height $h_i$ and width $w\br{X_i}/2$ positioned between $w\br{R_i}$ and $w\br{R_{i+1}}/2$ distanced points from the right-hand end of the histogram of $\sigma$. It is easy to see that the area under this new plot is $4\alpha$ times smaller then the plot of $\pi$. By $\pi[\dots i]$ we denote the sets scheduled in $\pi$ up to the end of the $i$-th iteration of the while loop. For any $t\in \mathbb{N}$, define $\sigma[\dots t]$ as the sets scheduled in $\sigma$ up to the moment~$t$.
\begin{lemma}
    For every $i$ such that $R_i$ is not empty, we have that $w\br{\cov\br{\sigma[\dots \fl{h_i}], R_i}} \leq w\br{R_i}/2$.
\end{lemma}
\begin{proof}
    Assume that there is at least one set in $\sigma[\dots \fl{h_i}]$ not in $\pi[\dots i-1]$. Otherwise, the claim is trivial since $w\br{\cov\br{\sigma[\dots \fl{h_i}], R_i}}=0$. In the $i$-th iteration of the while loop, we run an $\alpha$-approximation algorithm for \textsc{Parallel Densest Subfamily} with input $R_i$ and $\cF\setminus\pi[\dots i-1]$. Since $\sigma[\dots \fl{h_i}]\setminus \pi[\dots i-1]$ is a feasible solution for this instance, we get:
\[
    \frac{w\br{\cov\br{\sigma[\dots \fl{h_i}]\setminus \pi[\dots i-1], R_i}}}{\spr{\sigma[\dots \fl{h_i}]\setminus \pi[\dots i-1]
    }}\leq \alpha \cdot \frac{w\br{X_i}}{\spr{\cF_i}}
\]
Since none of the items in $R_i$ are covered by $\pi[\dots i-1]$ and $\spr{\sigma[\dots \fl{h_i}]\setminus \pi[\dots i-1]}\leq h_i$, we have 
\[
w\br{\cov\br{\sigma[\dots \fl{h_i}], R_i}} = w\br{\cov\br{\sigma[\dots \fl{h_i}]\setminus \pi[\dots i-1], R_i}}\leq \alpha\cdot \frac{w\br{X_i}}{\spr{\cF_i}}\cdot h_i = w\br{R_i}/2.
\]

The lemma follows.
\end{proof}

The above lemma shows that after $\fl{h_i}$ units of time in the optimal solution $\sigma$, at most $w\br{R_i}/2$ weight from $R_i$ is covered. Therefore the plot of $\sigma$ has to have a height of at least $\fl{h_i}+1\geq h_i$ at the point $w\br{R_i}/2$. The top-left corner of the column corresponding to $h_i$ is at the point $\br{w\br{R_i}/2, h_i}$, which is below the plot of $\sigma$. Therefore, since the plot of $\sigma$ is non-decreasing, the entire column corresponding to $h_i$ is below the plot of $\sigma$. This gives the desired approximation ratio.
\end{proof}

We use the following proposition:
\begin{proposition}\label{prop:pds-approx}
    There exist the following approximation algorithms for \textsc{PDS}:
    \begin{itemize}
        \item An $\frac{e}{e-1}+\epsilon$-approximation FPTAS for identical machines.
        \item An $\frac{2e}{e-1}+\epsilon$-approximation FPTAS for unrelated machines.
    \end{itemize}
\end{proposition}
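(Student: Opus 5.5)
Guess the makespan, solve a budgeted coverage problem at that makespan, and take the densest result. The density of an assignment $\cF$ is $w(\cov(\cF,\cU))/\spr{\cF}$. If the optimal assignment $\cF^*$ has makespan $T^*=\spr{\cF^*}$, then $\cF^*$ is feasible for \textsc{MCMK} with all budgets $B_j=T^*$. Hence a $\beta$-approximation for \textsc{MCMK}, run with budgets $T^*$, returns $\cF$ with $\spr{\cF}\le T^*$ and $w(\cov(\cF))\ge w(\cov(\cF^*))/\beta$. Its density is therefore at least $\Delta(\cF^*,\cU)/\beta$.

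Since $T^*$ is unknown, I will guess it. It is an integer in $[1,C]$. To keep the running time polynomial in the input size, I will try all $T=\lceil(1+\epsilon')^t\rceil$ for $t=0,\dots,O(\log_{1+\epsilon'} C)$, run the \textsc{MCMK} subroutine with $B_j=T$ for every $j$, and output the densest assignment found. Let $T$ be the smallest guess with $T\ge T^*$, so $T\le(1+\epsilon')T^*$. At this guess $\cF^*$ is feasible, and the returned $\cF$ has density at least
\[
\frac{w(\cov(\cF^*))}{\beta\,T}\ \ge\ \frac{\Delta(\cF^*,\cU)}{\beta(1+\epsilon')}.
\]
Choosing $\epsilon'$ small relative to $\epsilon$ gives ratio $\beta+\epsilon$. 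The total running time is polynomial in the input and $1/\epsilon$.

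It remains to supply $\beta$ for each machine model.

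\textbf{Identical machines.} Here costs $c(S)$ and budgets are identical, so this is maximum coverage over a multiple knapsack with equal capacities. This is a monotone submodular objective over a multiple-knapsack constraint, for which an $\frac{e}{e-1}+\epsilon$ approximation is known from the literature on maximum coverage with multiple knapsacks / submodular multiple knapsack. I will cite it rather than reprove it.

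\textbf{Unrelated machines.} Costs are $c(S,j)$, so I need a generalized-assignment variant of \textsc{MCMK}. My first attempt is to run the greedy/continuous-greedy framework for submodular maximization under a GAP-type constraint. This loses a factor $2$ in feasibility when one rounds the assignment LP (as in Shmoys--Tardos). Each machine's load then becomes at most $2T$, or equivalently at most $T$ after discarding one item per machine. Either way density drops by at most a factor $2$, giving $\frac{2e}{e-1}+\epsilon$.

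The main obstacle is this unrelated case: the approximation result with the right constants must be justified. I would obtain the fractional solution by continuous greedy (or a configuration-style LP) over the GAP polytope with budgets $T$, which gives fractional coverage at least $(1-1/e)$ times optimal. I would then round it with Shmoys--Tardos so that loads are at most $T+\max_{S,j\colon x_{S,j}>0} c(S,j)\le 2T$. This is valid after discarding pairs with $c(S,j)>T$, which the optimum never uses. The remaining question is whether coverage is preserved in expectation under this rounding. Via pipage/concave relaxation of coverage, the loss is a factor $(1-1/e)$ relative to the LP. If that fails, I would fall back to splitting each machine's assigned sets into a part of load at most $T$ and a single large set, keeping the heavier-coverage part. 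This halves the coverage instead of doubling the makespan, and yields the same factor $2$.
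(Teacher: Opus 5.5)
Your guessing framework and your identical-machine argument are the paper's. The paper also tries budgets on a $(1+\delta)$-grid, runs the $(1-e^{-1}-\delta)$-approximation for \textsc{MCMK} of \cite{SimpleDeterministicApproximationForSubmodularMultipleKnapsackProblem} with equal budgets, and loses $(1+\delta)/(1-e^{-1}-\delta)$. For unrelated machines the paper does no rounding of its own. It invokes the $\frac{2e}{e-1}$-approximation of \cite{MaximumCoverageWithClusterConstraints}, asserted to allow forbidden set--machine pairs, on $m$ copies of each set, where copy $S_j$ costs $c(S,j)$ and may only go to machine $j$.

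Your own route (LP over the GAP polytope with budget $T$, then Shmoys--Tardos with loads at most $T+\max c(S,j)\le 2T$) is a legitimate alternative. However, the step you call ``the remaining question'' is exactly where the difficulty lies, and you leave it open.
\begin{itemize}
\item Shmoys--Tardos controls only loads; an arbitrary integral matching in its slot graph need not keep the valuable sets.
\item Invoking ``pipage'' does not settle it. The feasible region is a bipartite matching polytope, and along an alternating-cycle direction, where several coordinates move with the same sign, the multilinear extension of a submodular function need not be convex. So pipage can lose value in general.
\item Your fallback does not help. It trades load for coverage but does not prevent coverage loss in rounding. Also, choosing the heavier part per machine is wrong, because coverage is not additive across machines; the choice must be global (all first-slot sets versus all the rest).
\end{itemize}

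The missing idea is that coverage depends on $x$ only through $y_S=\sum_j x_{S,j}$. Round $G(x)=F(y(x))$, where $F$ is the multilinear extension of coverage in the set variables; at integral matchings $G$ equals the coverage.
\begin{itemize}
\item In the slot graph, every internal vertex of an alternating cycle or maximal path keeps its degree. A pipage step therefore moves $y$ by $0$ (cycles, or paths with both ends at slots), by $\pm\epsilon e_S$ (one end at a set), or by $\epsilon(e_S-e_{S'})$ (both ends at sets).
\item $F$ is linear or convex along each of these directions, so some end of each step does not decrease $G$.
\item Pipage never leaves the fractional support, so the Shmoys--Tardos load bound survives.
\item Start from an optimal $y$ of the LP maximizing $\sum_u w(u)\min\{1,\sum_{S\ni u}y_S\}$. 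Its value is at least the optimal coverage at budget $T$ and at most $\frac{e}{e-1}F(y)$.
\end{itemize}
This gives coverage at least $(1-e^{-1})$ times optimal with loads at most $2T$, hence the ratio $\frac{2e}{e-1}(1+\epsilon')$. With this argument added, your proof is self-contained and handles machine-dependent costs directly. The paper instead relies on an asserted modification of a black-box algorithm.
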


The proof and the algorithms are provided in Section~\ref{sec:pds}. By combining Theorem~\ref{theorem:PMSSC_approx} with the above proposition and an appropriate choice of $\epsilon$, we obtain the following corollary:
\begin{corollary}
    There exist the following approximation algorithms for $\ProblemPMSSC$:
    \begin{itemize} 
    \item $\frac{4e}{e-1}+\epsilon< 6.33$-approximation FPTAS for $\ProblemPMSSC$ on identical machines.
    \item $\frac{8e}{e-1}+\epsilon< 12.66$-approximation FPTAS for $\ProblemPMSSC$ on unrelated machines.
    \end{itemize}
\end{corollary}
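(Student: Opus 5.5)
The corollary is a direct composition of Theorem~\ref{theorem:PMSSC_approx} with Proposition~\ref{prop:pds-approx}; the only genuine work is in choosing $\epsilon$ and checking the running time. Suppose we want a target accuracy $\epsilon>0$ for \textsc{PMSSC}. In the identical-machine case I would call the \textsc{PDS} FPTAS of Proposition~\ref{prop:pds-approx} with accuracy $\epsilon'=\epsilon/4$. This is a $\frac{e}{e-1}+\epsilon/4$-approximation for \textsc{PDS}, so Theorem~\ref{theorem:PMSSC_approx} applied with $\alpha=\frac{e}{e-1}+\epsilon/4$ gives the ratio $4\alpha=\frac{4e}{e-1}+\epsilon$. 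The unrelated-machine case is identical: take $\alpha=\frac{2e}{e-1}+\epsilon/4$, which gives $4\alpha=\frac{8e}{e-1}+\epsilon$. For the numerical bounds, note that $\frac{4e}{e-1}\approx 6.327$ and $\frac{8e}{e-1}\approx 12.655$. So for every sufficiently small $\epsilon$ (say $\epsilon<0.003$) the ratios are below $6.33$ and $12.66$ respectively.

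The running-time claim needs one observation about Algorithm~\ref{alg:pmssc_scheme}. I have to check that the scheme calls the \textsc{PDS} subroutine only polynomially many times, and that each call is made on an instance of size at most that of the input. Each iteration of the while loop covers at least one new element of $R$, so there are at most $n$ iterations. To see this, note that the optimal \textsc{PDS} value on $(R,S)$ is positive whenever $R\neq\emptyset$: the remaining sets must cover $R$, since every element of $R$ lies in some set and sets are removed only after being scheduled, and a set that covered an element would have removed it from $R$. An $\alpha$-approximate solution therefore also has positive density, and hence covers some element of $R$. One subtlety is that an approximate solution might be able to add useless sets and still satisfy the guarantee. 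That is harmless, because the density bound still forces $w(\cov(\cF,R))>0$. Each call runs in time polynomial in the input size and $1/\epsilon'=4/\epsilon$, so the total running time is polynomial in the input size and $1/\epsilon$. This is the FPTAS-type guarantee claimed.

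The only step needing any care is this termination and progress argument: positive optimal density must imply positive coverage by the returned assignment. Everything else is arithmetic. I would state it as a short remark before combining the two results.
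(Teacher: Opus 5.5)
Your proposal is correct and takes the same route as the paper, which obtains the corollary in one line by plugging the \textsc{PDS} guarantees of Proposition~\ref{prop:pds-approx} into Theorem~\ref{theorem:PMSSC_approx} with a rescaled $\epsilon$. Your extra argument that the loop runs at most $n$ times is also correct (the remaining sets always cover $R$, so the optimal density stays positive, assuming positive weights or after discarding zero-weight elements), and it fills in a detail the paper leaves implicit. One small numerical slip: $\frac{4e}{e-1}\approx 6.3279$, so $\frac{4e}{e-1}+\epsilon<6.33$ holds only for $\epsilon$ below roughly $0.0021$, and your threshold $\epsilon<0.003$ works only for the unrelated-machine bound, where the slack is about $0.004$.
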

\section{Proof of Proposition~\ref{prop:pds-approx}: Parallel Densest Subfamily}
\label{sec:pds}
\subsection{Warm up: Identical machines}
For identical machines, given $\epsilon>0$, $\ProblemPDS$ has the following approximation algorithm:

\begin{algorithm}[H]
\caption{Parallel Densest Subfamily on Identical Machines}
\label{alg:pds_uniform}

\LinesNumbered

Guess a budget $B$ by testing consecutive powers of $1+\delta$, where $\delta = \frac{\epsilon\cdot (e-1)^2}{e\cdot\br{2e-1+\epsilon\cdot (e-1)}}$

Temporarily remove all sets $S$ with $c\br{S}>B$

Run the $\frac{1}{1-e^{-1}-\delta}$-approximation algorithm for \textsc{Maximum Coverage Multiple Knapsack} from \cite{SimpleDeterministicApproximationForSubmodularMultipleKnapsackProblem} with budgets $B_j=B$ for every knapsack/machine $j\in [m]$ to obtain a family of sets $\cC$

Let $\cF_B$ denote the assignment returned for budget $B$

\Return $\cF=\argmax_B\brc{\Delta\br{\cF_B, \cU}}$

\end{algorithm}

\begin{theorem}
    The above algorithm is an $\frac{e}{e-1}+\epsilon$-approximation FPTAS for \textsc{PDS} on identical machines.
\end{theorem}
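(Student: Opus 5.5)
Let $\cF^*$ be an optimal \textsc{PDS} assignment with makespan $L^*=\spr{\cF^*}$ and density $\Delta^*=w\br{\cov\br{\cF^*,\cU}}/L^*$. The plan is to show that one of the guessed budgets is at most a factor $1+\delta$ above $L^*$, and that the \textsc{MCMK} solution for that budget already has density close to $\Delta^*$. Since the algorithm returns the densest $\cF_B$ over all guesses, this bound carries over to the output. The running time claim will be handled separately at the end.

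\textbf{Step 1 (the right guess).} Costs are natural numbers, so $1\le L^*\le C$. Let $B$ be the smallest power of $1+\delta$ with $B\ge L^*$; then $L^*\le B<(1+\delta)L^*$. Every set used by $\cF^*$ has $c\br{S}\le L^*\le B$, so none is removed in the filtering step. Each machine of $\cF^*$ has load at most $L^*\le B$, so $\cF^*$ is feasible for the \textsc{MCMK} instance with all budgets $B_j=B$. The cited algorithm of \cite{SimpleDeterministicApproximationForSubmodularMultipleKnapsackProblem} therefore returns $\cF_B$ with
\[
w\br{\cov\br{\cF_B,\cU}}\ge \br{1-e^{-1}-\delta}\, w\br{\cov\br{\cF^*,\cU}},
\qquad \spr{\cF_B}\le B.
\]

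\textbf{Step 2 (density bound).} Dividing the two inequalities of Step 1 gives
\[
\Delta\br{\cF_B,\cU}\ge \frac{\br{1-e^{-1}-\delta}\, w\br{\cov\br{\cF^*,\cU}}}{(1+\delta)L^*}=\frac{1-e^{-1}-\delta}{1+\delta}\,\Delta^*.
\]
The returned $\cF$ maximizes density over all $B$, so its density is at least this. It remains to check that
\[
\frac{1+\delta}{1-e^{-1}-\delta}\le \frac{e}{e-1}+\epsilon
\]
for the stated $\delta$. Multiplying through by $e$ turns the left side into $\frac{e(1+\delta)}{(e-1)-e\delta}$. Solving $\frac{e(1+\delta)}{(e-1)-e\delta}=\frac{e}{e-1}+\epsilon$ for $\delta$ gives
\[
\delta=\frac{\epsilon(e-1)^2}{e\br{2e-1+\epsilon(e-1)}},
\]
which is exactly the chosen value. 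This algebra is routine, and I would only verify the equality, or monotonicity giving $\le$.

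\textbf{Step 3 (running time, the main obstacle).} The number of guesses is $O\br{\log_{1+\delta} C}=O\br{\delta^{-1}\log C}$, which is polynomial in the input size and $1/\epsilon$ since $\delta=\Theta(\epsilon)$. The delicate point is whether the cited \textsc{MCMK} algorithm runs in time polynomial in $1/\delta$, which is needed for the FPTAS claim. I would invoke its stated running-time guarantee, which is polynomial in $n,k,m$ and $1/\delta$, and conclude that the total time is polynomial in the input and $1/\epsilon$. If that reference only gives a PTAS-type dependence on $\delta$, the fallback is to weaken the claim to a polynomial-time approximation for each fixed $\epsilon$.
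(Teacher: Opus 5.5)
Your proposal is correct and follows essentially the same route as the paper. Both guess $B$ with $B^*\le B\le(1+\delta)B^*$, observe that no set of $\cF^*$ is removed and that $\cF^*$ is feasible for \textsc{MCMK} with all budgets equal to $B$, combine the $(1-e^{-1}-\delta)$ coverage guarantee with makespan at most $B$, and then solve for $\delta$; your algebra for $\delta$ checks out.

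Your running-time caveat is well placed. The paper's proof only counts the $O(\log_{1+\delta} C)$ guesses and never addresses how the cited \textsc{MCMK} subroutine's running time depends on $1/\delta$, and the FPTAS label hinges on exactly that.
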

\begin{proof}
    Since the number of guesses for $B$ is $O\br{\log_{1+\delta} C}$ the running time is $\text{poly}\br{n, m, \log_{1+\delta} C}$. Let $\cF^*$ be an optimal assignment of sets to machines. Let $B^*=\spr{\cF^*}$ be the cost of the most expensive machine in the optimal solution. We have that $\Delta\br{\cF^*, \cU} = \frac{w\br{\cov\br{\cF^*, \cU}}}{B^*}$. By guessing $B^*\leq B\leq\br{1+\delta}\cdot B^*$, we have that all sets in the optimal solution are still present after step 2. Since we run an $\frac{1}{1-e^{-1}-\delta}$-approximation algorithm for \textsc{MCMK} with budgets of all knapsacks $j\in[m]$ set to $B_j=B$, we know that $w\br{\cov\br{\cC, \cU}}\geq \br{1-e^{-1}-\delta}\cdot\br{w\br{\cov\br{\cF^*, \cU}}}$. Since each knapsack corresponds to one machine $j\in [m]$ we have that $c\br{\cF\br{j}}\leq B\leq \br{1+\delta}\cdot B^*$ and therefore 
    \[
    \Delta\br{\cF, \cU} \geq \frac{w\br{\cov\br{\cC, \cU}}}{B}\geq \frac{1-e^{-1}-\delta}{\br{1+\delta}}\cdot\frac{w\br{\cov\br{\cF^*, \cU}}}{B^*} = \frac{1-e^{-1}-\delta}{\br{1+\delta}}\cdot\Delta\br{\cF^*, \cU},
    \]
    which is equivalent to
    \[
    \Delta\br{\cF^*, \cU} \leq \frac{1+\delta}{1-e^{-1}-\delta}\cdot \Delta\br{\cF, \cU}.
    \]
    By substituting
    \[
        \delta = \frac{\epsilon\cdot (e-1)^2}{e\cdot \br{2e-1+\epsilon\cdot (e-1)}},
    \]
    we get
    \[
        \frac{1+\delta}{1-e^{-1}-\delta}=\frac{e}{e-1}+\epsilon,
    \]
    and thus the claim follows.
\end{proof}

\subsection{Unrelated machines}
 To obtain an approximation algorithm for \textsc{PDS} on unrelated machines we use an alternative approximation algorithm for the \textsc{MCMK} problem of \cite{MaximumCoverageWithClusterConstraints} with slightly worse approximation ratio of $\frac{2}{1-e^{-1}}$. It should be pointed out that the authors only show that the algorithm works for identical machines. However, by a careful examination of the procedure and its analysis it is not hard to see that the algorithm can be modified to permit prohibition of assigning certain sets to certain machines. In fact, the authors explicitly note that they employ this trick in their algorithm to prevent the LP-based procedure from assigning overly costly sets to machines with low budgets. Specifically, if a given set is too costly for a machine, then it is not possible to assign it to this machine, even fractionally. 

 Now, in order to reduce \textsc{PDS} to the above setup, we replace each set $S\in\cS$ with its $m$ copies $\brc{S_1,\dots,S_m}$, where we set $c\br{S_j}=c\br{S, j}$ and we only allow assignining $S_j$ to $j$ (or nowhere). It is easy to see, that any sensible solution either assigns at most one reprentative $S_j$ of $S$ to some machine $j\in[m]$ or does not assign any representative of $S$ at all, and this representative $S_j$ occupies exactly the same amount of time on machine $j$ as $S$ would occupy in the original instance.
 Our algorithm for \textsc{PDS} is therefore almost exactly the same as above, we guess a common budget $B$, run the algorithm for \textsc{Maximum Coverage Multiple Knapsack} repetitively, and output the best solution. By picking $\delta$ appropriately, and testing consecutive powers of $1+\delta$ we get:
\begin{theorem}
    There exists an $\frac{2e}{e-1}+\epsilon$-approximation FPTAS for \textsc{PDS} on unrelated machines.
\end{theorem}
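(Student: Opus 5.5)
The plan is to copy the identical-machine proof almost verbatim, replacing the MCMK subroutine by the modified algorithm of \cite{MaximumCoverageWithClusterConstraints}. For a fixed $\delta>0$ we try every candidate budget $B\in\{(1+\delta)^t\}$ with $(1+\delta)^t\le (1+\delta)C$. For each $B$ we build the copied instance in which each $S\in\cS$ is replaced by copies $S_1,\dots,S_m$ with $c(S_j)=c(S,j)$. The copy $S_j$ may only go to knapsack $j$, and only if $c(S,j)\le B$. All budgets are set to $B_j=B$. On this instance we run the algorithm of \cite{MaximumCoverageWithClusterConstraints} and map each copy $S_j$ back to $S$ on machine $j$, obtaining $\cF_B$. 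If some $S$ receives several copies, we keep one arbitrarily; this leaves every machine load at most $B$ and can only shrink the coverage by removing duplicates, which cover nothing new. We output $\argmax_B \Delta(\cF_B,\cU)$. There are $O(\log_{1+\delta}C)$ guesses, each polynomial, which gives the claimed running time.

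For the analysis, let $\cF^*$ be optimal with $B^*=\spr{\cF^*}$, and take the guess with $B^*\le B\le(1+\delta)B^*$. Every set $S\in\cF^*(j)$ satisfies $c(S,j)\le B^*\le B$, so the copy $S_j$ is allowed on knapsack $j$. Hence $\cF^*$ induces a feasible MCMK solution on the copied instance with the same covered weight. The key step is the correspondence between solutions: feasible MCMK solutions on the copied instance, after de-duplication, are exactly PDS assignments with every machine load at most $B$, and the coverage is preserved. Given this, the $\frac{2}{1-e^{-1}}$ guarantee (with its $\delta$ slack, if the FPTAS form is $\frac{2}{1-e^{-1}}+\delta'$) yields
\[
w\br{\cov\br{\cF_B,\cU}}\ \ge\ \frac{1-e^{-1}}{2+\delta'}\, w\br{\cov\br{\cF^*,\cU}} .
\]
Combining this with $\spr{\cF_B}\le B\le (1+\delta)B^*$ gives $\Delta(\cF,\cU)\ge \frac{1-e^{-1}}{(2+\delta')(1+\delta)}\Delta(\cF^*,\cU)$.

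It remains to choose $\delta,\delta'$ as functions of $\epsilon$ so that $\frac{(2+\delta')(1+\delta)}{1-e^{-1}}\le \frac{2e}{e-1}+\epsilon$. For example, taking $\delta'=\delta$ and solving the resulting quadratic inequality, or simply taking $\delta=\Theta(\epsilon)$ small enough, makes the factor $1/\delta$ in the running time polynomial in $1/\epsilon$.

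The main obstacle is justifying that the algorithm of \cite{MaximumCoverageWithClusterConstraints} still achieves ratio $\frac{2}{1-e^{-1}}$ when set--knapsack pairs are forbidden and knapsacks have heterogeneous effective costs. I would argue this through their LP: forbidding a pair only fixes the corresponding variable $x_{S,j}=0$. Their rounding (pipage or contention resolution on the coverage relaxation, followed by the factor-$2$ loss from splitting off at most one fractional or large item per knapsack) never creates a new pair and treats each knapsack's budget constraint separately. Therefore the LP value still upper bounds the optimum restricted to allowed pairs, and each step of their analysis goes through unchanged.
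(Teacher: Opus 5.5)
Your proposal is correct and follows the paper's route: the $m$-copy reduction with each $S_j$ restricted to machine $j$, guessing $B^*\le B\le(1+\delta)B^*$ among powers of $1+\delta$, running the forbidden-pair-modified algorithm of \cite{MaximumCoverageWithClusterConstraints}, and bounding the ratio by $\frac{2e}{e-1}(1+\delta)$. The paper fixes $\delta=\frac{\epsilon(e-1)}{2e}$ so this equals $\frac{2e}{e-1}+\epsilon$ exactly, so your extra $\delta'$ slack is unnecessary, and it justifies the forbidden-pair modification at about the same level of detail as you do.
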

\begin{proof}
    Choose $
        \delta = \frac{\epsilon\cdot (e-1)}{2e}.$
    By testing consecutive powers of $1+\delta$, we can guess a budget $B$ such that $B^*\leq B\leq (1+\delta)\cdot B^*$, where $B^*$ is the maximum load in an optimal assignment. For this guessed budget, the algorithm of~\cite{MaximumCoverageWithClusterConstraints} gives an assignment with covered weight at least $\frac{e-1}{2e}$ of the optimum for budget $B$, hence
    \[
        \Delta\br{\cF,\cU} \geq \frac{1}{\frac{2e}{e-1}\cdot(1+\delta)}\cdot \Delta\br{\cF^*,\cU}.
    \]
    Therefore
    \[
        \frac{\Delta\br{\cF^*,\cU}}{\Delta\br{\cF,\cU}}\leq \frac{2e}{e-1}\cdot (1+\delta)=\frac{2e}{e-1}+\epsilon,
    \]
    which gives the claim.
\end{proof}

\section{Precedence constraints}\label{sec:precedence}

Throughout this section we consider a generalization of the previous setup and assume that the sets are subject to precedence constraints given by a directed acyclic graph $G$ on $\cS$. A precedence constraint $S_1\preceq S_2$, for $S_1,S_2\in\cS$ means that $S_1$ needs to be completed before $S_2$. Let $d$ be the length of the longest path in $G$. By $\cP\br{S}$ denote the closure of $S\in\cS$ in $G$, i.~e., the set of all of predecessors of $S$ in $G$ (including~$S$). Contrary to previous sections, we will assume that any assignment of sets to the machines has some ordering. We say that an assignment of sets to machines $\cF$ is \emph{precedence-closed} if for every $S\in \cF\br{j}$, all predecessors of $S$ (excluding itself) are scheduled before~$S$. The density of a precedence-closed assignment $\cF$ is defined as $\Delta\br{\cF, \cU} = \frac{w\br{\cov\br{\cF, \cU}}}{\spr{\cF}}$. The \textsc{Precedence-Constrained Parallel Densest Subfamily} (\textsc{PCPDS}) problem is to find a precedence-closed assignment of sets to machines maximizing $\Delta\br{\cF, \cU}$.

The List-Scheduling algorithm is one, which starts at the moment $t=0$ and assigns tasks to the machines by picking an unscheduled task which has no unscheduled predecessors and assigning it to any machine available at the soonest time. It is well-known that such an algorithm produces a schedule of length at most twice the optimum \cite{BoundsOnMultiprocessingTimingAnomalies}. In particular, we use a version, which schedules tasks according to the length of the critical path leading to a given task, which for unit costs means scheduling tasks layer by layer.
\subsection{Unit cost sets}
Hereby we assume, that all of the sets have unit costs. To solve the problem, we significantly alter the algorithm of~\cite{Precedence-ConstrainedMinSumSetCover}.

\begin{algorithm}[H]
\caption{Finding a precedence-closed assignment with good density}
\label{alg:pds:precedence}

\LinesNumbered

\For{$h\in [d]$}
{
    Obtain assignment $\cF_h$ by List-Scheduling all the sets up to depth $h$ in $G$
}
\For{$S\in \cS$}
{
    Obtain assignment $\cF_S$ by List-Scheduling the sets in $\cP\br{S}$
}
\Return $\cF=\argmax_{\cF\in\brc{\cF_h}_{h\in[d]}\cup\brc{\cF_S}_{S\in\cS}}\brc{\Delta\br{\cF, \cU}}$

\end{algorithm}

\begin{theorem}
The algorithm above returns a precedence-closed assignment $\cF$ such that $\Delta\br{\cF^*, \cU} \leq O\br{k^{2/3}}\cdot \Delta\br{\cF, \cU}$, where $\cF^*$ is an optimal precedence-closed assignment of sets to machines.
\end{theorem}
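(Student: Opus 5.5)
The plan is to compare the optimal assignment $\cF^*$ against two of the candidates the algorithm tries: one depth-layer assignment $\cF_h$ and one closure assignment $\cF_S$. I will show that the better of the two is within a factor $O\br{\sqrt{k}}$ of $\Delta\br{\cF^*,\cU}$; since $\sqrt{k}\le k^{2/3}$, the claim follows.

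\textbf{Parameters of the optimum.} Let $T=\spr{\cF^*}$ be the makespan of $\cF^*$, let $W=w\br{\cov\br{\cF^*,\cU}}$, and let $N$ be the number of sets used by $\cF^*$. Then $\Delta\br{\cF^*,\cU}=W/T$ and $N\le mT$.

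First I use that $\cF^*$ is precedence-closed and all costs are unit. A chain of $h$ sets must be processed one after another, even across machines, because each set must be completed before its successor starts. Hence every set in $\cF^*$ has depth at most $T$ in $G$, and $\cF^*$ is contained in the family of all sets of depth at most $h:=\min\br{T,d}$.

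\textbf{List-Scheduling bound.} I will use Graham's standard bound for layer-by-layer List-Scheduling of unit tasks: scheduling a precedence-closed family $\cX$ of depth $D$ gives makespan at most $\sum_{\text{layers}}\lceil L/m\rceil\le D+\spr{\cX}/m$, where $L$ ranges over the layer sizes. This bound is the only scheduling fact I need.

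\textbf{Candidate 1: the depth assignment $\cF_h$.} The assignment $\cF_h$ schedules at most $k$ sets of depth at most $T$. Its makespan is therefore at most $T+k/m$. It covers everything $\cF^*$ covers, so
\[
\Delta\br{\cF_h,\cU}\ge \frac{W}{T+k/m},
\]
which is within a factor $1+k/(mT)$ of the optimum.

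\textbf{Candidate 2: a closure assignment $\cF_S$.} By averaging, some set $S$ used in $\cF^*$ covers at least $W/N$ weight of $\cov\br{\cF^*,\cU}$. Since $\cF^*$ is precedence-closed, $\cP\br{S}\subseteq\cF^*$. So $\cP\br{S}$ has depth at most $T$ and size at most $N\le mT$. List-Scheduling it gives makespan at most $T+N/m\le 2T$. Hence
\[
\Delta\br{\cF_S,\cU}\ge \frac{W/N}{2T},
\]
which is within a factor $2N\le 2mT$ of the optimum.

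\textbf{Balancing.} The returned $\cF$ is at least as dense as both candidates. Writing $x=mT$, the approximation ratio is at most
\[
\min\brc{2x,\ 1+k/x}.
\]
If $x\le\sqrt{k}$, the first term gives at most $2\sqrt{k}$. Otherwise the second term gives at most $1+\sqrt{k}$. So the ratio is $O\br{\sqrt{k}}\subseteq O\br{k^{2/3}}$.

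\textbf{Main obstacle.} The step needing the most care is justifying the structural facts. These are: that the optimum's depth is bounded by its makespan $T$, that $\cP\br{S}\subseteq\cF^*$, and that the layered List-Scheduling indeed achieves the additive bound depth $+$ size$/m$ rather than only $2\cdot$OPT. If the depth argument fails for some modelling reason, I would fall back on bounding only the closure candidate by $\spr{\cP\br{S}}\le N$ and redo the balancing with three regimes. That route is presumably where the weaker exponent $2/3$ would arise.
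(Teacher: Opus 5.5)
Your proof is correct, and it proves more than the statement: you get an $O\br{\sqrt{k}}$ bound, which implies $O\br{k^{2/3}}$. For $m=1$ this matches the $\sqrt{k}$ guarantee the paper cites for the single-machine problem.

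The paper's route uses three candidates: the closure assignments $\cF_S$, the depth assignment $\cF_{h^*}$, and the all-sets assignment $\cF_d$. Their ratio bounds are $2p$, $mk/p$ and $k/m$, where $p$ is the number of sets in $\cF^*$. The paper combines them through a geometric mean, and that is where the exponent $2/3$ comes from.

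Your gain comes from two changes:
\begin{itemize}
\item You bound the makespan of the depth candidate additively by $h+k/m$, using Graham's chain argument, instead of the paper's multiplicative $h^*\cdot k/m$. This makes the depth ratio $1+k/\br{m\spr{\cF^*}}$, which decreases as $\spr{\cF^*}$ grows, whereas the paper's $k/m$ does not.
\item You feed $p\le m\spr{\cF^*}$ into the closure ratio. The paper uses this fact only for the comparison with $\cF_d$. With it, both ratios become functions of the single parameter $x=m\spr{\cF^*}$, balancing at $x\approx\sqrt{k}$. The all-sets candidate and the bound $\Delta\br{\cF^*,\cU}\le m\,w\br{\cU}/p$ become unnecessary.
\end{itemize}

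The structural facts you flag as the main obstacle are the same ones the paper relies on:
\begin{itemize}
\item The depth of every set in $\cF^*$ is at most $\spr{\cF^*}$. The paper asserts $\spr{\cF^*}=h^*$, though only $\ge$ is needed.
\item $\cP\br{S}\subseteq\cF^*$, which the paper uses implicitly when comparing List-Scheduling of $\cP\br{S_i}$ with $\spr{\cF^*}$.
\end{itemize}
Both hold under the makespan reading of $\spr{\cF}$ used throughout the section, so your fallback is not needed.

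You could add one sentence that the returned assignment is precedence-closed: List-Scheduling starts a set only after all its predecessors are finished, and these predecessors lie in the scheduled family because it is precedence-closed. The paper also leaves this implicit.
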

\begin{proof}
    Let $\delta$ denote the maximum density achieved by some $\cF_S$ for $S\in \cS$. Consider an optimal solution $\cF^*$ consisting of sets $S_1, S_2, \ldots, S_p$. For each $S_i$, we have $\Delta\br{\cF_{S_i}, \cU} \leq \delta$. Let $h_i=\spr{\cF_{S_i}}$. Since we assign the sets in $\cP\br{S_i}$ to machines using List-Scheduling which results in schedule at most twice longer than the shortest one, we know that for every $i\in [p]$ we have that $h_i \leq 2\cdot\spr{\cF^*}$. Therefore, 
    \[
        w\br{\cov\br{\cF^*}} \leq \sum_{i=1}^{p}w\br{\cov\br{\cP\br{S_i}}}\leq \delta\cdot \sum_{i=1}^{p} h_i \leq 2\delta\cdot p\cdot \spr{\cF^*}
        \]
            and 
    \[
        \frac{\Delta\br{\cF_S}}{\Delta\br{\cF^*}} = \frac{\delta\cdot \spr{\cF^*}}{w\br{\cov\br{\cF^*}}}\geq \frac{\delta\cdot \spr{\cF^*}}{2\delta\cdot p\cdot \spr{\cF^*}} = \frac{1}{2p}.
    \]

    Trivially, we know that $\Delta\br{\cF^*}\leq m\cdot w\br{\cU}/p$. This is because the optimal solution can never cover more then the entire universe and the length of any schedule of $p$ sets on $m$ machines is at least $p/m$. We also have $\Delta\br{\cF_d}\geq w\br{\cU}/k$, since $\cS$ covers the entire universe and $k$ units of time suffice to schedule all $k$ sets. Therefore, we get $
        {\Delta\br{\cF_d}}/{\Delta\br{\cF^*}}\geq \frac{p}{m\cdot k}$.
    
    Denote by $\cU_h$ the set of elements covered by $\cF_h$. Let $h^*$ be the depth of the deepest set in $\cF^*$. We have that $\Delta\br{\cF^*}\leq w\br{\cU_{h^*}}/h^*$, since $\cov\br{\cF^*}\subseteq \cU_{h^*}$ and $\spr{\cF^*}= h^*$. Additionally, we know that all of the sets in $\cF_{h^*}$ can be assigned to machines resulting in a schedule of length at most $h^*\cdot k/m$. To see this, observe that going layer by layer of the precedence DAG, the greedy algorithm schedules each layer using at most $k/m$ unit time-slots of each of $m$ machines. Therefore, we have that $\Delta\br{\cF_{h^*}}\geq \frac{m\cdot w\br{\cU_{h^*}}}{h^*\cdot k}$, implying that $
        {\Delta\br{\cF_{h^*}}}/{\Delta\br{\cF^*}}\geq {m}/{k}$.

    We have that
    \[
        \max\brc{\frac{\Delta\br{\cF_S}}{\Delta\br{\cF^*}}, \frac{\Delta\br{\cF_{h^*}}}{\Delta\br{\cF^*}}, \frac{\Delta\br{\cF_d}}{\Delta\br{\cF^*}}} \geq \max\brc{\frac{1}{2p}, \frac{p}{m\cdot k}, \frac{m}{k}}\geq \br{\frac{1}{2p}\cdot\frac{p}{m\cdot k}\cdot\frac{m}{k}}^{1/3}=\frac{1}{2^{1/3}\cdot k^{2/3}}
    \]
    where the second inequality is by the AM-GM inequality. 
\end{proof}

\subsection{Precedence constraints: Out-forests, identical machines}
Out-forest precedence constraints are ones, where each set has at most one direct predecessor. Such a condition results in a DAG which is a collection of trees, where the edges are directed from the root to the leaves. In this case, we can improve the approximation ratio to $O\br{\log k}$, even for identical machines (so now sets can have various costs).

In \cite{ApproximationAlgorithmsForOptimalDecisionTreesAndAdaptiveTSPProblems} it is shown, that if $\br{V, d}$ is a tree metric, then there exists a polynomial time $\br{O\br{\log \spr{V}}, 4}$-bicriteria approximation algorithm for \textsc{Group Steiner Orienteering} (see Section~\ref{sec:preliminaries} to recall the definition of the \textsc{GSO}). That is, the algorithm returns an $r$-tour of length at most $O\br{\log \spr{V}}\cdot B$ and covers at least $1/4$ amount of weight covered by an optimal $r$-tour of length at most $B$. We will leverage this result to show an $O\br{\log k}$-approximation algorithm for \textsc{PCPDS} with out-forest precedence constraints. The algorithm is as follows:

\begin{algorithm}[H]
\caption{Finding a precedence-closed assignment with good density for out-forest precedence constraints.}
\label{alg:pds:precedence_outforest}

\LinesNumbered

Construct an edge weighted rooted tree metric $T=\br{V, d}$ with a root $r$ in the following way:  $V=\cS\cup\brc{r}$, for $S_1, S_2\in \cS$, $S_1S_2\in E\br{T}$ iff $S_1$ is the direct predecessor of $S_2$.
For every $S\in \cS$, if $S$ has no predecessor, add an edge $rS$ to $E\br{T}$.
For every $S\in \cS$, set the length of the edge connecting $S$ to its parent $P$ in $T$ to be $d\br{P, S}=c\br{S}$.

For each $u\in \cU$, create a group $X_u=\brc{S\in \cS\colon u \in S}$ with weight $w\br{X_u}=w\br{u}$.

Guess a budget $B$ by testing consecutive powers of $2$.

Temporarily remove from $T$ all vertices $S\in \cS$ such that $d\br{r, S}$, the distance from the root $r$ to $S$ in $T$, is greater than $B$.

Run the $\br{O\br{\log \spr{V\br{T}}}, 4}$-approximation algorithm for the \textsc{Group Steiner Orienteering} on $\br{T, \cX}$ with budget $K=2m\cdot B$ to obtain a tour $\sigma$.

Let $\cC$ be the collection of sets visited by $\sigma$.

Use the List-Scheduling algorithm to schedule the sets in $\cC$ on $m$ machines to obtain an assignment $\brc{\cF_B\br{j}}_{j=1}^m$.

\Return $\cF=\argmax_{B}\brc{\Delta\br{\cF_B, \cU}}$.

\end{algorithm}

\begin{theorem}
    The algorithm above returns a precedence-closed assignment $\cF$ such that $\Delta\br{\cF^*, \cU} \leq O\br{\log k}\cdot \Delta\br{\cF, \cU}$, where $\cF^*$ is an optimal precedence-closed assignment of sets to machines.
\end{theorem}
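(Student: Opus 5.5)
We compare the optimal assignment $\cF^*$ with the algorithm's output for the guessed budget $B$ with $B^*\le B\le 2B^*$, where $B^*=\spr{\cF^*}$. The argument has two directions. First, the sets used by $\cF^*$ give a feasible GSO tour of length at most $2mB$, so the GSO algorithm covers at least a quarter of their weight. Second, the sets visited by the returned tour can be list-scheduled into makespan $O(\log k)\cdot B$. Dividing these gives density at least $\Omega(1/\log k)\cdot\Delta\br{\cF^*,\cU}$. Since $\spr{V(T)}=k+1$, the logarithmic factor is $O(\log k)$. The number of guesses is $O(\log C)$, so the running time is polynomial.

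\textbf{Feasibility.} Let $\cC^*$ be the family of sets used in $\cF^*$. Because $\cF^*$ is precedence-closed, $\cC^*$ is closed under predecessors, so $\cC^*\cup\{r\}$ induces a subtree $T^*$ of $T$ containing $r$.

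Every $S\in\cC^*$ finishes by time $B^*$, and all of its ancestors are scheduled before it. On identical machines this means the total cost of the root-to-$S$ path is at most $B^*$: each ancestor finishes before $S$ starts, so their costs accumulate along the chain. Hence $d(r,S)\le B^*\le B$, and no vertex of $T^*$ is removed in step~2.

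The total edge length of $T^*$ is $\sum_{S\in\cC^*}c(S)\le mB^*\le mB$. A DFS traversal of $T^*$ is an $r$-tour of length twice this, i.e.\ at most $2mB=K$. This tour covers every group $X_u$ with $u\in\cov\br{\cF^*,\cU}$. The GSO guarantee therefore gives
\[
w\br{\cov\br{\cC,\cU}}\ge \tfrac14\, w\br{\cov\br{\cF^*,\cU}}.
\]

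\textbf{Makespan.} The tour $\sigma$ has length at most $O(\log k)\cdot 2mB$. The visited vertices form a connected set containing $r$, so any visited $S$ has its whole root path visited. Hence $\cC$ is precedence-closed, and $\sum_{S\in\cC}c(S)\le\spr{\sigma}=O(mB\log k)$.

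We bound the list-scheduling makespan by Graham's bound:
\[
\text{makespan}\le \frac{1}{m}\sum_{S\in\cC}c(S)+L,
\]
where $L$ is the longest chain length in $\cC$, here the maximum root distance $d(r,S)$ over $S\in\cC$. All vertices with $d(r,S)>B$ were removed before running GSO, so $L\le B$. The makespan is therefore $O(B\log k)+B=O(B^*\log k)$. We also check that list scheduling respects the order, so the resulting assignment is precedence-closed.

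\textbf{Conclusion.} Combining the two bounds,
\[
\Delta\br{\cF_B,\cU}\ge \frac{w\br{\cov\br{\cF^*,\cU}}/4}{O(B^*\log k)}=\Omega(1/\log k)\cdot\Delta\br{\cF^*,\cU},
\]
and the returned $\cF$ is at least this good.

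\textbf{Main obstacle.} The delicate step is the makespan bound. A tour length of order $mB\log k$ alone does not bound the makespan of $\cC$; we also need the critical-path term to be at most $B$. This is exactly why vertices with root distance exceeding $B$ are pruned, and why the feasibility argument must verify that the optimum's sets survive the pruning, using identical machines and out-forest structure to equate chain length with $d(r,S)$.
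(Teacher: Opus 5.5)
Your proposal is correct and follows the paper's proof essentially step for step. A DFS traversal of the optimum's predecessor-closed subtree is a feasible GSO tour for $K=2mB$. The visited vertices form a root-containing subtree, hence a precedence-closed family. The Graham-type list-scheduling bound then splits the makespan into a critical-path part, capped at $B$ by the pruning step, and a fully-busy part, bounded by the tour length over $m$.

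The differences are minor. You explicitly justify why the optimum's sets survive the pruning, which the paper only asserts. You cite Graham's bound instead of re-deriving it. You also use the safe inequality $c(\cC)\le\spr{\sigma}$, whereas the paper writes the equality $\spr{\sigma}=2c(\cC)$, which need not hold exactly although the direction it actually uses, $c(\cC)\le\spr{\sigma}/2$, does.
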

\begin{proof}
    Since the number of guesses for $B$ is $O\br{\log C}$ the running time is $\text{poly}\br{n, m, \log C}$. Let $\cF^*$ be an optimal precedence-closed assignment of sets to machines. Let $B^*=\spr{\cF^*}$ be the cost of the most expensive machine in the optimal solution. We have that $\Delta\br{\cF^*, \cU} = \frac{w\br{\cov\br{\cF^*, \cU}}}{B^*}$. By guessing $B^*\leq B\leq2\cdot B^*$, we have that all sets in the optimal solution are still present after step 4. Recall, that we run an $\br{O\br{\log \spr{V}}, 4}$-approximation algorithm for \textsc{Group Steiner Orienteering} with budget $K=2m\cdot B$. Consider the $r$-tour $\pi$ obtained by starting at $r$ and traversing the sets from $\cF^*$ in a depth-first-search manner in $T$ (possibly going back to $r$ several times). The length of this tour is at most $2 m\cdot B^*\leq K$, because the overall cost of all sets in $\cF^*$ is at most $m\cdot B^*$, and each such set has an edge of length $c\br{S}$ above it which needs to be traversed exactly two times. Also, no other edges of $T$ are traversed that way. Since each group corresponds to some element of $\cU$, given a budget $K$, an optimal $r$-tour in $T$ covers a weight of at least $w\br{\cov\br{\cF^*, \cU}}$. Moreover, $\cF$ contains all the sets visited by $\sigma$, hence we get that $w\br{\cov\br{\cC, \cU}}\geq \frac{1}{4}\cdot\br{w\br{\cov\br{\cF^*, \cU}}}$. Additionally, observe that for every set $S\in \cC$ and every predecessor $S'$ of $S$, $S'\in \cC$, since in order to reach a vertex in $T$, $\sigma$ needs to visit all vertices on the path from $r$ to $S$ in $T$. This means, that $\cC$ is precedence-closed. Thus, the assignment $\cF$ returned by the List Scheduling algorithm is precedence-closed as well.

    Now, we want to reason about $\spr{\cF}$. Our analysis borrows from the well-known analysis of the List-Scheduling algorithm. Let $S_{1},\dots, S_{h}$ be a sequence of sets such that $S_1$ is the set in $\cF$ which was finished last (ties broken arbitrarily), $S_2$ is its predecessor and so on, until there is no predecessor left. Let $t\in \left[0, \spr{\cF}\right)$. We show that always at least one of the two following conditions hold:
    \begin{enumerate}
        \item There is some task $S_d$ for $d\in[h]$ which is processed during the moment $t$.
        \item All of the machines are occupied at the moment $t$.
    \end{enumerate}
    To see this, assume towards contradiction that there is no such $d\in[h]$ at the moment $t$ and that there is some machine $j\in[m]$ which is idle during $t$. Let $d\in [h]$ be such, that $S_d$ is the last finished set before moment $t$ if there is one, or $d=0$ otherwise. By definition, this means that during $t$, all sets preceeding $S_{d+1}$ have been already finished. Therefore, $S_{d+1}$ is available at the moment $t$ which is a contradiction, since in such case the List-Scheduling algorithm would schedule it at the $j$-th machine. 
    
    Now, we split the upper bound on $\spr{\cF}$ into two quantities depending on whether some $S_d$ is processed during $t$. The total length of such intervals is upper bounded by the length of the critical path in the subtree of $T$ visited by $\sigma$, which is at most $B$, because we have discarded all sets which are further then $B$ from $r$. The total length of the remaining intervals is upper bounded by $c\br{\cC}/m$, since during those times all of the machines are occupied. Because we know that $\spr{\sigma}\leq O\br{\log \spr{V}}\cdot K=O\br{\log k}\cdot K$ and $\spr{\sigma}=2\cdot c\br{\cC}$, we get that:
    \[
        \spr{\cF}\leq B+\frac{c\br{\cC}}{m}\leq B+\frac{\spr{\sigma}}{2m}= B+\frac{O\br{\log k}\cdot K}{2m} = 
        B+\frac{O\br{\log k}\cdot 2m\cdot B}{2m} = B+O\br{\log k}\cdot B \leq O\br{\log k}\cdot B^*.
    \]
    Therefore, we have that 
    \[
    \Delta\br{\cF, \cU} \geq \frac{w\br{\cov\br{\cC, \cU}}}{\spr{\cF}}\geq \frac{1}{4\cdot O\br{\log k}}\cdot\frac{w\br{\cov\br{\cF^*, \cU}}}{B^*} = \Omega\br{\frac{1}{\log k}}\cdot\Delta\br{\cF^*, \cU}.
    \]
    which gives the desired approximation ratio.
\end{proof}

Careful examination of the analysis of greedy algorithm for $\ProblemPMSSC$ from Section~\ref{sec:pmssc-scheme} reveals that in fact it also works for the precedence-constrained instances. This leads to the following corollaries:
\begin{corollary}
    There exist the following approximation algorithms for $\ProblemPMSSC$ subject to precedence constraints:
    \begin{itemize}
        \item An $O\br{k^{2/3}}$-approximation algorithm for unit cost sets and general precedence constraints.
        \item An $O\br{\log k}$-approximation algorithm for identical machines and out-forest precedence constraints.
    \end{itemize}
\end{corollary}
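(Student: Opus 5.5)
The corollary follows by plugging the two density algorithms of this section into Algorithm~\ref{alg:pmssc_scheme}, once we check that Theorem~\ref{theorem:PMSSC_approx} survives precedence constraints. Inside the while loop I would call, instead of an unconstrained \textsc{PDS} routine, the \textsc{PCPDS} routine on the residual instance. The residual instance has remaining universe $R_i$. Its sets are the not yet scheduled ones, with precedence DAG $G$ restricted to them, where all already scheduled sets count as completed predecessors. The returned assignment $\cF_i$ is appended machine by machine to $\pi$.

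\textbf{Step 1: feasibility.} Each $\cF_i$ is precedence-closed with respect to the residual DAG. Every predecessor of a set in $\cF_i$ is therefore either earlier in the same block on the same machine, or was appended in an earlier iteration. I would make feasibility robust by concatenating the iterations as global time blocks. Block $i$ starts on every machine at time $\ell_{i-1}=\sum_{p<i}\spr{\cF_p}$, with idle padding where needed. Then any predecessor placed in an earlier block finishes by $\ell_{i-1}$, before the block-$i$ sets start. The padding does not hurt the analysis, since the bound $\sum_i w(R_i)\spr{\cF_i}$ used in Theorem~\ref{theorem:PMSSC_approx} already charges block lengths $\spr{\cF_i}$.

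\textbf{Step 2: the key lemma.} I need that $\sigma[\dots\fl{h_i}]\setminus\pi[\dots i-1]$ is a feasible \textsc{PCPDS} solution of length at most $h_i$ in the residual instance.
\begin{itemize}
\item \emph{Closure.} A set in this prefix of the optimal schedule $\sigma$ has all its predecessors finished earlier in $\sigma$, hence also inside the prefix. Those predecessors lie either in $\pi[\dots i-1]$, where they are already done, or in the difference itself.
\item \emph{Length.} Keep the same machines and the same start times, and delete the removed sets, leaving idle gaps or shifting left while respecting order. The makespan is at most $h_i$. Shifting left keeps precedence on a single machine. For a predecessor on another machine, I would define density with respect to the actual timed schedule, i.e.\ allow idle time. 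Both of our \textsc{PCPDS} algorithms compare against $\spr{\cF^*}$ via List-Scheduling and critical path bounds, and those bounds remain valid for timed schedules.
\end{itemize}
The $\alpha$-approximation inequality then gives $w(\cov(\sigma[\dots\fl{h_i}],R_i))\le w(R_i)/2$ verbatim. The histogram shrinkage finishes the $4\alpha$ bound.

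\textbf{Step 3: plug in the routines.} Using the $O(k^{2/3})$ theorem for unit costs and the $O(\log k)$ theorem for out-forests, both applied to residual instances, gives $4\cdot O(k^{2/3})$ and $4\cdot O(\log k)$. Residual instances stay in the right classes: removing scheduled sets keeps unit costs, and contracting completed predecessors keeps an out-forest. The residual size is at most $k$, so the ratios hold uniformly.

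\textbf{Main obstacle.} The delicate point is Step~2. We must make precise the notion of length of a precedence-closed assignment across machines, and check that the optimal prefix minus the already scheduled sets is feasible with length at most $h_i$. I would first try to resolve this by defining \textsc{PCPDS} solutions as timed schedules, and then verify that the two density theorems only use lower bounds on $\spr{\cF^*}$ that hold for timed schedules.
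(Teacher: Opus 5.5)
Your proposal is correct and follows the paper's route. The paper only asserts that the analysis of Theorem~\ref{theorem:PMSSC_approx} carries over when the \textsc{PCPDS} routines are plugged into Algorithm~\ref{alg:pmssc_scheme} on residual instances, and you supply the details it leaves to ``careful examination'': the idle padding between iteration blocks (genuinely needed, since plain appending can violate cross-machine precedences), the closure argument for $\sigma[\dots\fl{h_i}]\setminus\pi[\dots i-1]$, and reading length as the makespan of a timed schedule, which is exactly the notion under which the two \textsc{PCPDS} theorems' bounds on $\spr{\cF^*}$ (critical path, total work over $m$, restriction of $\cF^*$ to $\cP(S_i)$) hold.
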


Recall that there is evidence that no algorithm with approximation ratio $O\br{k^{1/12-\epsilon}}$ nor $O\br{n^{1/6-\epsilon}}$ exists even in the case $m=1$, assuming \textsc{Planted Dense Subgraph} conjecture. Closing the gap between this lower bound and $O\br{k^{2/3}}$-approximation remains an open problem.
\section{Conclusions and open problems}
In this work, we have provided a series of approximation algorithms for the \textsc{Parallel Min-Sum Set Cover} problem. These include a $\frac{4e}{e-1}+\epsilon<6.33$-approximation algorithm for identical machines, an $\frac{8e}{e-1}+\epsilon<12.66$-approximation algorithm for unrelated machines, and an $O\br{k^{2/3}}$-approximation algorithm for precedence-constrained unit cost sets. For out-forest precedence constraints on identical machines, we also obtain an $O\br{\log k}$-approximation algorithm. All of these results are based on obtaining good approximation algorithms for the \textsc{Parallel Densest Subfamily} or its precedence-constrained version, which is then used in a greedy manner to iteratively construct a full schedule.

It would be of interest to incorporate different scheduling constraints, such as release dates on the sets, or due dates on the elements. We note that for the release dates, all of our algorithms can easily be adopted to work, with a loss of multiplicative factor of $2$ in the approximation ratio. This can be done during the guessing step of our algorithms. Upon deciding on a budget $B$ (if the algorithm has this step), we can temporarily remove all of the sets with release dates larger then $B$. Then, we can ingore the release dates, obtain an assignment in the same way as before and modify it by adding a waiting time of length $B$ before all of the sets. In order to do this for the $O\br{k^{2/3}}$-approximation algorithm, which does not consist of the guessing step, we can introduce it and employ the same trick. 


\bibliographystyle{plain}
\bibliography{lipics-v2021-sample-article}
\appendix
\section{Weighted Chernoff Bound Derivation}

\begin{theorem}[(weighted) Chernoff bound]
    Let $X_1, X_2, \dots, X_n$ be independent random variables taking values in $[0,1]$, let $w_1, w_2, \dots, w_n$ be non-negative weights, such that for every $i\in [n]$ we have that $w_i\in [0,1]$. Let $X=\sum_{i=1}^{n} w_i\cdot X_i$ and let $\mu = \E{X}$. Then for every $\delta_1 > 0$ and $\delta_2 \in (0,1)$ we have that:
    \[
        \Pr\br{X \geq (1+\delta_1)\cdot \mu} \leq \br{\frac{e^{\delta_1}}{(1+\delta_1)^{1+\delta_1}}}^\mu
        \qquad
        \Pr\br{X \leq (1-\delta_2)\cdot \mu} \leq \br{\frac{e^{-\delta_2}}{(1-\delta_2)^{1-\delta_2}}}^\mu.
    \]
\end{theorem}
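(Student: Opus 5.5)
The plan is to derive both bounds by the standard moment-generating-function (Chernoff) method, adapted to weighted sums. The only place where the weights enter is a convexity estimate that uses $w_iX_i\in[0,1]$.

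\textbf{Upper tail.} Fix $\lambda>0$. By Markov's inequality applied to $e^{\lambda X}$,
\[
\Pr\br{X\geq (1+\delta_1)\mu}\leq e^{-\lambda(1+\delta_1)\mu}\,\E{e^{\lambda X}} .
\]
Independence gives $\E{e^{\lambda X}}=\prod_i \E{e^{\lambda w_i X_i}}$. Each $Y_i=w_iX_i$ takes values in $[0,1]$, so convexity of $y\mapsto e^{\lambda y}$ on $[0,1]$ yields $e^{\lambda Y_i}\leq 1+Y_i(e^\lambda-1)$. Taking expectations and using $1+x\leq e^x$,
\[
\E{e^{\lambda Y_i}}\leq \exp\br{\E{Y_i}(e^\lambda-1)} .
\]
Multiplying over $i$ gives $\E{e^{\lambda X}}\leq \exp\br{\mu(e^\lambda-1)}$. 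We then choose $\lambda=\ln(1+\delta_1)>0$. Substituting, the bound becomes exactly $\br{e^{\delta_1}/(1+\delta_1)^{1+\delta_1}}^\mu$.

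\textbf{Lower tail.} We run the same argument with $\lambda<0$, equivalently Markov on $e^{-tX}$ with $t>0$:
\[
\Pr\br{X\leq(1-\delta_2)\mu}\leq e^{t(1-\delta_2)\mu}\,\E{e^{-tX}} .
\]
The same convexity step on $[0,1]$ gives $\E{e^{-tY_i}}\leq \exp\br{\E{Y_i}(e^{-t}-1)}$, hence $\E{e^{-tX}}\leq \exp\br{\mu(e^{-t}-1)}$. We then set $t=-\ln(1-\delta_2)$, which is positive because $\delta_2\in(0,1)$. Simplifying $e^{-t}-1=-\delta_2$ and $e^{t(1-\delta_2)}=(1-\delta_2)^{-(1-\delta_2)}$ gives $\br{e^{-\delta_2}/(1-\delta_2)^{1-\delta_2}}^\mu$.

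\textbf{Main step to check.} There is no real obstacle; the one point needing care is the per-variable MGF bound. The statement assumes only $X_i\in[0,1]$ and $w_i\in[0,1]$, so the convexity bound must be applied to the product $Y_i=w_iX_i\in[0,1]$. It must not be applied to $X_i$ alone with the weight placed in the exponent, since that would lead to an $e^{\lambda w_i}$ term that does not factor into $\mu$. With $Y_i$ handled this way, linearity of expectation gives $\sum_i\E{Y_i}=\mu$, and the rest is the textbook computation. The edge case $\mu=0$ is trivial: then $X=0$ almost surely, and both bounds read $\leq 1$.
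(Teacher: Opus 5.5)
Your proposal is correct and follows the paper's proof essentially line for line. Both use Markov's inequality on $e^{\lambda X}$ (respectively $e^{-tX}$), apply the convexity bound to $w_iX_i\in[0,1]$ together with $1+x\le e^x$ to get $\E{e^{\lambda X}}\le\exp\br{(e^{\lambda}-1)\mu}$, and then set $\lambda=\ln(1+\delta_1)$ and $t=-\ln(1-\delta_2)$.

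One small aside: your warning against putting the weight in the exponent is overstated. The extra convexity estimate $e^{\lambda w_i}-1\le w_i(e^{\lambda}-1)$ for $w_i\in[0,1]$ would rescue that route too, but this does not affect your argument.
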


\begin{proof}
We first prove the upper-tail bound for $\delta_1>0$. For any $\lambda>0$, Markov's inequality gives:
\begin{align*}
\Pr\br{X \ge \br{1+\delta_1}\cdot\mu}
= \Pr\br{e^{\lambda X} \ge e^{\lambda\cdot\br{1+\delta_1}\cdot\mu}} \le \frac{\E{e^{\lambda X}}}{e^{\lambda\cdot\br{1+\delta_1}\cdot\mu}}.
\end{align*}
By independence, we have $
\E{e^{\lambda \cdot X}}
= \prod_{i=1}^n \E{e^{\lambda\cdot w_i\cdot X_i}}$.
Fix $i\in[n]$. Since $w_i\cdot X_i\in[0,1]$, by convexity of $y\mapsto e^{\lambda \cdot y}$ on $[0,1]$, for every $y\in[0,1]$,
$e^{\lambda \cdot y}
\le \br{1-y}\cdot e^0 + y\cdot e^{\lambda}
= 1 + \br{e^{\lambda}-1}\cdot y$.
Substituting $y=w_i\cdot X_i$, taking expectation, and using $1+t\le e^t$, we obtain:
\begin{align*}
\E{e^{\lambda\cdot w_i\cdot X_i}}
\le 1 + \br{e^{\lambda}-1}\cdot w_i\cdot\E{X_i} \le \exp\br{\br{e^{\lambda}-1}\cdot w_i\cdot\E{X_i}}.
\end{align*}
Hence:
\begin{align*}
\E{e^{\lambda X}}
\le \exp\br{\br{e^{\lambda}-1}\cdot\sum_{i=1}^n w_i\cdot\E{X_i}} = \exp\br{\br{e^{\lambda}-1}\cdot\mu}.
\end{align*}
Therefore,
$\Pr\br{X \ge \br{1+\delta_1}\cdot\mu}
\le \exp\br{\mu\cdot\br{e^{\lambda}-1-\lambda\cdot\br{1+\delta_1}}}$.
The function $f\br{\lambda}=e^{\lambda}-1-\lambda\cdot \br{1+\delta_1}$ is minimized at $\lambda=\ln\br{1+\delta_1}$, and so:
\begin{align*}
\Pr\br{X \ge \br{1+\delta_1}\cdot\mu}
\le \exp\br{-\mu\cdot\br{\br{1+\delta_1}\cdot\ln\br{1+\delta_1}-\delta_1}} = \br{\frac{e^{\delta_1}}{\br{1+\delta_1}^{1+\delta_1}}}^{\mu}.
\end{align*}

Now let $\delta_2\in(0,1)$. For any $t>0$, again by Markov's inequality:
\begin{align*}
\Pr\br{X \le \br{1-\delta_2}\cdot\mu}
= \Pr\br{e^{-tX} \ge e^{-t\cdot\br{1-\delta_2}\cdot\mu}} \le \frac{\E{e^{-t\cdot X}}}{e^{-t\cdot\br{1-\delta_2}\cdot\mu}}.
\end{align*}
Applying the same mgf estimate with $\lambda=-t<0$ yields $
\E{e^{-t\cdot X}}
\le \exp\br{\br{e^{-t}-1}\cdot\mu}$,
so:
\begin{align*}
\Pr\br{X \le \br{1-\delta_2}\cdot\mu}
&\le \exp\br{\mu\cdot\br{e^{-t}-1+t\cdot\br{1-\delta_2}}}.
\end{align*}
The right-hand side is minimized at $t=-\ln\br{1-\delta_2}$, which gives:
\begin{align*}
\Pr\br{X \le \br{1-\delta_2}\cdot\mu}
\le \exp\br{-\mu\cdot\br{\delta_2+\br{1-\delta_2}\cdot\ln\br{1-\delta_2}}} = \br{\frac{e^{-\delta_2}}{\br{1-\delta_2}^{1-\delta_2}}}^{\mu}.
\end{align*}
\end{proof}

\end{document}